\crefname{property}{property}{Property}
\crefname{equation}{eq}{Eq}
\crefname{algocf}{Algorithm}{Algorithm}
\newtheorem{theorem}{Theorem}
\newtheorem{lemma}{Lemma}[section]
\newtheorem{proposition}{Proposition}[section]
\newtheorem{claim}[lemma]{Claim}
\theoremstyle{definition}
\newtheorem{remark}[lemma]{Remark}
\theoremstyle{definition}
\newtheorem{mdalg}{Algorithm}
\newenvironment{Algorithm}{\begin{tbox}\begin{mdalg}}{\end{mdalg}\end{tbox}}
\newtheorem{mddist}{Distribution}
\definecolor{ForestGreen}{rgb}{0.1333,0.5451,0.1333}
\definecolor{Red}{rgb}{0.9,0,0}
\definecolor{DarkRed}{rgb}{0.5,0.1,0.1}
\definecolor{RURed}{rgb}{0.95,0.1,0.1}
\definecolor{DarkBlue}{rgb}{0.1,0.1,0.5}
\newcommand{\Ot}{\ensuremath{\widetilde{O}}}
\newcommand{\eps}{\ensuremath{\varepsilon}}
\newcommand{\bracket}[1]{\left[#1\right]}
\newcommand{\paren}[1]{\ensuremath{\left(#1\right)}\xspace}
\newcommand{\card}[1]{\left\vert{#1}\right\vert}
\newcommand{\prob}[1]{\Pr\paren{#1}}
\newcommand{\expect}[1]{\Exp\bracket{#1}}
\newcommand{\set}[1]{\ensuremath{\left\{ #1 \right\}}}
\newcommand{\poly}{\mbox{\rm poly}}
\newcommand{\polylog}[1]{\textnormal{polylog}\,{#1}\xspace}
\newcommand{\ALG}{\ensuremath{\mbox{\sc alg}}\xspace}
\DeclareMathOperator*{\Exp}{\ensuremath{{\mathbb{E}}}}
\DeclareMathOperator*{\Prob}{\ensuremath{\textnormal{Pr}}}
\renewcommand{\Pr}{\Prob}
\newenvironment{tbox}{\begin{tcolorbox}[
		enlarge top by=5pt,
		enlarge bottom by=5pt,
		 breakable,
		 boxsep=2pt,
                  left=5pt,
                  right=7pt,
                  top=10pt,
                  arc=0pt,
                  boxrule=1pt,toprule=1pt,
                  colback=white
                  ]
	}
{\end{tcolorbox}}
\newcommand{\II}{\ensuremath{\mathbb{I}}}
\newcommand{\mireal}[1][]{
  \ifx\relax#1\relax%
    \II(\mione \,; \mitwo)%
  \else%
    \II(\mione \,; \mitwo\mid #1)%
  \fi
}
\newcommand{\dist}{\ensuremath{\mathcal{D}}}
\renewcommand{\leq}{\leqslant}
\renewcommand{\geq}{\geqslant}
\newcommand{\myqed}[1]{\let\qed\relax \hspace*{\fill} #1 \ensuremath{\square}}
\newcommand{\orc}{\ensuremath{\texttt{ORC}_{G, I^*}}\xspace}
\newcommand{\nbri}[1]{\ensuremath{\deg_{I^*}(#1)}}
\newcommand{\nbrit}[1]{\ensuremath{\widetilde{\deg}_{I^*}(#1)}}
\title{Learning-augmented Maximum Independent Set}
\author{%
\begin{tabular}{cc}
\begin{tabular}[t]{c}
Vladimir Braverman\footnote{\href{mailto:vb21@rice.edu}{vb21@rice.edu}. Supported partially by the Naval Research (ONR) grant N00014-23-1-2737 and NSF-CNS 2333887 award.}\\ Rice University and Google Research  \\
\end{tabular} &
\begin{tabular}[t]{c}
Prathamesh Dharangutte \footnote{\href{mailto:prathamesh.d@rutgers.edu}{prathamesh.d@rutgers.edu}. Supported by NSF through IIS-2229876 and CCF-2118953.
} \\ Rutgers University
\end{tabular} \\ 
\rule{0pt}{3ex}
\begin{tabular}[t]{c}
Vihan Shah\footnote{\href{mailto:vihan.shah@uwaterloo.ca}{vihan.shah@uwaterloo.ca}. Supported in part by Sepehr Assadi's Sloan Research Fellowship and NSERC Discovery 
Grant.} \\ University of Waterloo \\
\end{tabular} &
\begin{tabular}[t]{c}
Chen Wang\footnote{\href{mailto:cw200@rice.edu}{cw200@rice.edu}.} \\ Rice University and Texas A\&M University\\
\end{tabular}
\end{tabular}
}
\date{}
\begin{document}

\maketitle

\begin{abstract}
    We study the Maximum Independent Set (MIS) problem on general graphs within the framework of learning-augmented algorithms.
    The MIS problem is known to be NP-hard and is also NP-hard to approximate to within a factor of $n^{1-\delta}$ for any $\delta>0$. 
    We show that we can break this barrier in the presence of an oracle obtained through predictions from a machine learning model that answers vertex membership queries for a fixed MIS with probability $1/2+\eps$.
    In the first setting we consider, the oracle can be queried once per vertex to know if a vertex belongs to a fixed MIS, and the oracle returns the correct answer with probability $1/2 + \eps$. 
    Under this setting, we show an algorithm that obtains an $\Ot(\sqrt{\Delta}/\eps)$\footnote{Throughout we use $\Ot(\cdot)$ to hide $\polylog(n)$ factors.}-approximation in $O(m)$ time where $\Delta$ is the maximum degree of the graph.
    In the second setting, we allow multiple queries to the oracle for a vertex, each of which is correct with probability $1/2 + \varepsilon$. For this setting, we show an $O(1)$-approximation algorithm using $O(n/\eps^2)$ \textit{total} queries and $\Ot(m)$ runtime. 

\end{abstract}

\section{Introduction}
\label{sec:intro}
We consider learning-augmented \emph{maximum independent set} (MIS) in this paper. Given a (unweighted, undirected) graph $G=(V,E)$, an independent set is a set of vertices $I \subseteq V$, such that for any $u,v\in I$, $(u,v)\not\in E$, i.e., there is \emph{no} edge between $u$ and $v$. The maximum independent set problem asks to find the independent set with the largest size in $G$.

Finding the maximum independent set is one of the classical NP-hard problems \cite{karp2010reducibility}. Furthermore, the seminal work of \cite{Hastad96,zuckerman2006linear} demonstrates the NP-hardness of approximating the size of the MIS to within a factor of $n^{1-\delta}$ for any $\delta>0$. 
In contrast, outputting any single vertex gives an $n$-approximation trivially.
\cite{boppana1992approximating} gave a non-trivial $O(n/\log^2 n)$-approximation to MIS and this was later improved by \cite{feige2004approximating}.
These results indicate that the problem is quite hard in its general form and thus, many research efforts have been devoted to approximation algorithms in special settings, e.g., planar graphs \cite{AlekseevLMM08,MagenM09}, rectangle-intersection graphs \cite{ChalermsookC09,ChuzhoyE16,GalvezKMMPW22}, and exponential-time algorithms \cite{robson2001finding,FominGK06,XiaoN13,BourgeoisEPR12}.


On the other hand,  heuristic algorithms, despite their bad worst-case guarantees, often exhibit commendable performance on real-world graphs \cite{andrade2012fast,DahlumLS0SW16,WalterosB20}. 
For instance, the greedy algorithm only offers an approximation guarantee of $O(\Delta)$, where $\Delta$ is the \emph{maximum degree} of $G$. 
However, it frequently yields satisfactory empirical results. 
The gap between the worst-case hardness and practical efficiency motivates us to study the MIS problem through the lens of beyond worst-case analysis \cite{boppana1987eigenvalues,roughgarden2021beyond}. In particular, under the modern context, we ask the question of finding the maximum independent set with \emph{learning-augmented oracles}.

\paragraph{Learning-augmented algorithms.} Learning-augmented algorithms, also known as algorithms with predictions, have attracted considerable attention in recent years (see, e.g.~\cite{PurohitSK18,HsuIKV19,LavastidaM0X21,ChristiansonHW22,SW23Learning,BGTZ23Online,BanerjeeC0L23,ACNSV23Improved,henzinger2023complexity,brand2024dynamic,sadek2024algorithms}, and references therein). This paradigm of beyond worst-case analysis has been successful in surmounting classical thresholds and bridging the gap between the worst-case hardness and practical efficiency (see, e.g., \cite{mitzenmacher2022algorithms}, for an excellent summary).
Typically, in learning-augmented algorithms, we assume the access to an oracle that gives part of the ``right answer'' to the problem, and fails with some small but non-negligible probability. Conceptually, these algorithms aim to take advantage of modern machine learning models, which are fairly accurate on predictive tasks yet make random mistakes in an inconsistent fashion. Learning-augmented algorithms provide a great way to analyze algorithms beyond the worst case, 
and these algorithms usually have immediate implications in practice (see the empirical results in, e.g., \cite{Chledowski0SZ21,HsuIKV19,ErgunFSWZ22,SW23Learning,ACNSV23Improved}). Inspired by the recent work in utilizing machine learning-based techniques for the maximum independent set ~\cite{AhnSS20,PontoizeauSYC21,BruscaQS0C23}, we consider the MIS problem through the lens of learning-augmented algorithms.

The advantage of the learning-augmented algorithms has inspired a flurry of work that studies \emph{graph problems} within this framework ~\cite{bamas2020primal,feijen2021using,chen2022faster,chen2022triangle, hu2023connectivity, antoniadis2023online, lattanzi2023speeding, azar2024discrete,COGLP24maxcut,ghoshal2024constraint}. In a very recent work, \cite{COGLP24maxcut} considered the Max-cut problem, in which the oracle model is closely related to our setting for the MIS problem. Under the Unique Game Conjecture (UCG), it is known that getting anything better than $\alpha \approx 0.878$ approximation for max-cut is NP-hard (\cite{khot2007optimal}). In contrast, \cite{COGLP24maxcut} showed that with a learning-augmented oracle, we could achieve better approximation than the $\alpha$ threshold in polynomial time. In another closely related work, \cite{ghoshal2024constraint} studied the more general constraint satisfaction problems (CSPs) trough the lens of the learning-augmented algorithms. There, they obtain results for both the Max-cut and the Max 2-Lin problem. Although \cite{COGLP24maxcut,ghoshal2024constraint} defines more general learning-augmented oracles, they, unfortunately, fall short of capturing the MIS-type of CSP problems, and their results do not have direct implications on the MIS problem. 

From the above discussion, we can see that $a)$ studying the maximum independent set problem in the framework of learning-augmented algorithms has great potential; and $b)$ to this end, the existing models and algorithms are not yet sufficient.
In light of this, we ask the following question:

\begin{center}
\emph{Under the framework of learning-augmented algorithms, what efficient algorithms can we get for the maximum independent set problem?}
\end{center}

\subsection{Our models and contributions}
In what follows, we will define the learning-augmented oracle model we consider and present our main results.

\paragraph{Our oracle model.}
We consider the following natural learning-augmented oracle: for a fixed maximum independent set $I^*$, the oracle answers whether a vertex $v\in I^*$ correctly with probability $1/2+\eps$, and incorrectly with probability $1/2-\eps$. In addition, the randomness is \emph{independent} across the vertices. 
We denote by $\orc(v)$ the answer the oracle gives when queried for vertex $v$.

We study approximation algorithms for MIS with the learning-augmented oracle in two settings: the \emph{persistent noise} setting and the \emph{non-persistent noise} setting. We discuss the settings and the results, respectively, as follows.
\begin{itemize}
\item {The \textbf{persistent noise} setting.} In the persistent noise setting, the randomness of $\orc$ is drawn exactly \emph{once}. 
Therefore, the answer for a vertex will remain the same no matter how many times you query the oracle.
Another way to think about this is that the oracle can be queried at most once for a vertex.
This setting is the most standard in the learning-augmented literature, and graph problems are often studied under persistent noise (see, e.g.~\cite{feijen2021using,chen2022faster,chen2022triangle,XiaH22,hu2023connectivity, antoniadis2023online,COGLP24maxcut,ghoshal2024constraint} and references therein).
Our main result in this setting is a randomized algorithm that with high probability\footnote{As standard in the literature, we use ``with high probability'' to denote a success probability of $1-{1}/{\poly{(n)}}$.} achieves an $\Ot(\sqrt{\Delta})$ (multiplicative) approximation to the MIS in $O(m)$ time (\Cref{thm:persist-noise}).
\item {The \textbf{non-persistent noise} settings.} In this setting, for each vertex $v$, we allow $\orc(v)$ to use \emph{fresh randomness} for different queries. If we are allowed to make $O(n \log{n})$ queries to the oracle in total, then we can trivially recover the entire set $I^*$ with high probability by querying each vertex $O(\log n)$ times. The interesting case is when we are allowed to make only $O(n)$ queries, i.e., a number that is \emph{asymptotically the same} as the persistent noise setting. 
Although the non-persistent noise setting is less frequently studied in the learning-augmented algorithm literature, it has recently sparked considerable interest in various problems \cite{GulloMT23,GuptaSM24,KMBC24query}.
In \Cref{app:mabs-simple-alg}, we show that it is easy to get an $O(\log{n})$-approximation with $O(n)$ queries. Our main result considerably improved on the approximation factor: we show that we can indeed obtain an $O(1)$ approximation with $O(n)$ queries and $\Ot(m)$ runtime (\Cref{thm:MABs-setting}).
\end{itemize}

Our results in the persistent noise setting hold assuming full independence, but it can be easily extended to the setting where oracle queries are assumed to use $k$-wise independent hash function for $k=O(\log n)$. Extending it to the pair-wise independent case is challenging as the failure probabilities in the concentration bounds are not enough for the application of a union bound.


\subsection{Technical overview}
The biggest challenge in leveraging the oracle information is distinguishing the case where $\orc(v)$ is indeed correct. In what follows, we give a high-level overview of our techniques describing how we can use the neighborhood information for this purpose. For the simplicity of the discussion, we always assume $\eps=\Theta(1)$ in the technical overview.

\paragraph{Persistent noise setting:} 
A natural approach in this setting would be to figure out the conditions in which a ``yes'' signal for a vertex $v$ from the oracle implies $v\in I^*$, by aggregating signals from $N(v)$. However, such an idea is hard in the following sense. For a vertex $v$ whose oracle query $\orc(v)=1$, if there are many $u\in N(v)$ such that $\orc(u)=1$, we can determine that $v$ should \emph{not} be in the MIS. However, the converse is not true: if a vertex $v$ is \emph{not} in $I^*$, it does \emph{not} necessarily have many neighbors in $I^*$. As a result, simply aggregating neighborhood information might not be enough to determine the membership of a vertex in the MIS.


The key idea here is, instead of looking at the oracle answer for vertex $v$ ($\orc(v)$), we look at what the oracle says for the \emph{neighborhood} of the vertex $v$. This turns out to be a good enough signal to eliminate vertices that have \emph{many} edges to the MIS $I^*$. 
Specifically, we can show that if $v$ has $\widetilde{\Omega}(\sqrt{\deg(v)})$ edges to $I^*$, then the oracle queries for $N(v)$ contain enough information to identify such a vertex $v$.
Upon removal of such vertices, the remaining vertices have a small degree ($\Ot(\sqrt{\Delta})$) to $I^*$, and a greedy independent set on the residual vertices gives a good approximation.

\paragraph{Non-persistent noise setting: } 
Our algorithm for this setting is a bit more nuanced as we aim to minimize the query complexity to the oracle while aiming to achieve a good approximation.
The starting point of our algorithm is from the viewpoint of the classical \emph{pure exploration} algorithms in \emph{multi-armed bandits (MABs)}. If we ignore the nature of MIS in our oracle, we can reduce to the following MABs problem: given $n$ arms with mean rewards as either $\frac{2}{3}$ or $\frac{1}{3}$, find \emph{all} the arms with mean reward $\frac{2}{3}$ with $O(n)$ arm pulls. 
It is well-known that one can find a \emph{single} best arm with high constant probability in $O(n)$ queries. The question is, can we solve the problem by resorting to purely MABs algorithms, and simply ignoring the nature of the MIS?

It turns out that the above plan is not generally feasible. In particular, we note that returning the set of \emph{all} arms with the higher reward is very similar to finding the \emph{top-$k$ arms} in the MABs literature (see, e.g.~\cite{KalyanakrishnanS10,KalyanakrishnanTAS12,CaoLTL15,ChenLQ17,SimchowitzJR17}). In general, it would require $\Omega(n\log{k})$ arm pulls to obtain top-$k$ arms with high constant probability (\cite{KalyanakrishnanTAS12,SimchowitzJR17}). In \Cref{app:lb-mabs}, we provide lower bound results, showing that to find even $O(1)$ fraction of the high-reward arms in the instance distribution requires $\omega(n)$ queries. The lower bounds teach us that to obtain the desired query efficiency and approximation guarantee, we have to exploit the structure of the MIS.

To better understand the hardness and the insights of MABs algorithms on our problem, let us look at the elimination-based algorithm as in the classical algorithm of \cite{Even-DarMM02,Even-DarMM06}. The first idea we can try is to adapt the elimination algorithm to our problem. To this end, a natural idea is to perform elimination based on whether the mean empirical reward of an arm is more than $\frac{1}{2}$. More concretely, we maintain a pool $\tilde{I}$ of surviving vertices and use $s_r$ as the number of queries to each vertex in round $r$ with $s_1 = O(1)$. In round $r$, we can query $\orc(v)$ for $s_{r}$ time for each $v\in \tilde{I}$. We then eliminate all vertices $v \in \tilde{I}$ whose number of `yes' answers is less than $s_{r}/2$, and recurse on the new $\tilde{I}$ to round $r+1$, for which we set $s_{r+1} = 1.5 s_r$.

Since the probability for any $v\not\in I^*$ to survive decreases doubly-exponentially with the number of rounds, we can show that \textit{all} vertices $v\not\in I^*$ are eliminated after $O(\log\log{n})$ rounds, and the total sample complexity on the \emph{non-MIS} vertices is at most $O(n)$. Furthermore, the probability of losing any $v\in I^*$ decreases exponentially, we can argue that in the end, $\tilde{I}$ contains at least $\Omega(1)$ fraction of the vertices in $I^*$. Unfortunately, due to this fact, for each vertex $v\in I^*\cap \tilde{I}$, i.e., the vertices in the MIS that survive till the end, we need to pay for $2^{O(\log\log{n})}=\polylog{n}$ on the sample complexity. Therefore, this pure exploration algorithm only works when the size of $I^*$ is upper-bounded by $n/\polylog{n}$, and its worst-case guarantee is only a $\polylog{n}$ approximation.

Note that a $\polylog{n}$ approximation is far from what we want: after all, there is a trivial algorithm that achieves $O(\log{n})$ approximation with $O(n)$ samples (see~\Cref{app:mabs-simple-alg} for details). Nevertheless, the existence of such an algorithm teaches us that the problematic case is when the MIS size is large and, in particular, \emph{comparable} to the size of the non-MIS vertices. As such, a natural idea is to design an algorithm that handles the case when the numbers of the MIS and the non-MIS vertices are comparable, and fuse this algorithm with the elimination-based MABs procedure we discussed above.

The above idea is quite close to the final strategy we adapt, albeit we proceed differently for the roles of the two components. In particular, we use the pure exploration MABs algorithm not to output a set with vertex set $\tilde{I}\subseteq I^*$, but to output a set of vertex set $\tilde{I}$ whose \emph{majority (but not necessarily all) of} vertices are in $I^*$. To this end, we use a more \emph{conservative} elimination strategy than the ones in the line of \cite{Even-DarMM02,Even-DarMM06}: instead of increasing the number of samples by a multiplicative factor, we increase the number of samples in each round by an \emph{additive} factor. In this way, we cannot guarantee that all the ``wrong'' arms are eliminated; however, we can argue that, since the probability for the non-MIS vertices to survive decreases exponentially, we have $i).$ the number of samples used on the vertices in $I^*$ is bounded by $O(n)$ \emph{before} the size of $\tilde{I}\setminus I^*$ reduces to the size of $\tilde{I}\cap I^*$; and $ii).$ the number of vertices in $\tilde{I}\cap I^*$ only decreases by a constant fraction. In this way, we can design an efficient procedure that eliminates the ``surplus'' non-MIS vertices to always create cases when the number of non-MIS vertices is smaller.

The final missing piece is the MIS algorithm that deals with the case when the number of MIS vertices takes the majority of the vertex set. Our algorithm to handle this case is to compute an \emph{approximate vertex cover} of the graph and the remaining vertices will form an approximate independent set. It is a well-known fact that if we compute a \emph{maximal matching} and take \emph{all} their endpoints, it forms a $2$-approximate vertex cover that covers all edges in the graph. Furthermore, since the size of the non-MIS vertices is small, there can be only a \emph{limited number} of vertices $v\in I^*$ that can be counted in the vertex cover. As such, we can simply \emph{remove} these vertices from the graph. The rest of the graph would form an independent set, and since we remove at most a constant fraction of vertices from $I^*$ throughout the two phases, we get an $O(1)$ approximation.

\section{Preliminaries}
\label{sec:prelim}

\paragraph{Notation.}
For a graph $G=(V,E)$, we use $\deg(v)$ and $N(v)$ for each vertex $v \in V$ to denote the degree 
and neighborhood of $v$, respectively. We use $G[U]$ for any set $U$ of vertices to denote the induced subgraph of $G$ on $U$. 

We let $I^*$ denote a fixed maximum independent set of the graph $G$. We let $N_{I^*}(v) = N(v) \cap I^*$ be the set of neighbors of the vertex $v$ in the independent set and let $\nbri{v}:= \card{N_{I^*}(v)}$ be its size.
Furthermore, we let $\widetilde{N_{I^*}}(v)$ be the set of neighbors of the vertex $v$ that are claimed to be in the independent set by the oracle and let $\nbrit{v}$ be its size.

For the purpose of conciseness, we defer the technical preliminaries to \Cref{app:tech-prelim}.


\section{An Algorithm in the Persistent Noise Setting}
\label{sec:alg-persist}


In this section we present an algorithm for the learning-augmented MIS problem with persistent noise. Formally we prove the following

\begin{theorem}
\label{thm:persist-noise}
There exists a randomized algorithm that given
\begin{enumerate}[label=\alph*)]
\item an input graph $G=(V,E)$ with maximum degree $\Delta$ and 
\item an MIS oracle $\orc$ with \emph{persistent noise} for an unknown maximum independent set $I^*$,
\end{enumerate}
in $O(m)$ time outputs an independent set $I$ such that $\card{I}\geq \frac{\eps}{12} \cdot (\Delta \ln n)^{-0.5} \cdot \card{I^*}$ with high probability.
\end{theorem}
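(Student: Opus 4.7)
The plan is to design a three-stage algorithm that (i) queries $\orc(v)$ exactly once per vertex, (ii) uses the count $\nbrit{v}$ as a filter to discard vertices with many edges into $I^*$, and (iii) runs the standard greedy independent-set procedure on the surviving subgraph. The central observation is that $\nbrit{v}$ is a sum of $\deg(v)$ independent Bernoullis with mean $\mu_v = (1/2-\eps)\deg(v) + 2\eps\,\nbri{v}$: the baseline $(1/2-\eps)\deg(v)$ arises from the $\deg(v)-\nbri{v}$ non-$I^*$ neighbors, while the surplus $2\eps\,\nbri{v}$ grows linearly with $\nbri{v}$. Because $v\in I^*$ forces $\nbri{v}=0$, the surplus cleanly separates ``$v\in I^*$'' from ``$\nbri{v}$ large'', provided it exceeds the Hoeffding fluctuation of order $\sqrt{\deg(v)\ln n}$. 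This is the point raised in the technical overview: looking at the oracle on $N(v)$, rather than at $\orc(v)$ itself, is what enables detection.

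Concretely, I will set a threshold $\tau(v) := (1/2-\eps)\deg(v) + c\sqrt{\deg(v)\ln n}$ for a suitable absolute constant $c$, and define $U := \{v\in V : \nbrit{v}\le \tau(v)\}$. Two high-probability claims, each proved by a one-sided Hoeffding bound followed by a union bound over the $n$ vertices, do the heavy lifting. First, every $v\in I^*$ has $\mu_v = (1/2-\eps)\deg(v)$, so $\nbrit{v}\le \mu_v + c\sqrt{\deg(v)\ln n} = \tau(v)$ with high probability, giving $I^*\subseteq U$. Second, for any $v\in U$ the bound $\nbrit{v}\le \tau(v)$ combined with the matching Hoeffding bound in the other direction forces $\mu_v \le \tau(v) + c\sqrt{\deg(v)\ln n}$; rearranging and substituting $\mu_v = (1/2-\eps)\deg(v) + 2\eps\,\nbri{v}$ yields $\nbri{v}\le (c/\eps)\sqrt{\deg(v)\ln n}\le (c/\eps)\sqrt{\Delta\ln n}$. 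In words, $U$ retains all of $I^*$ while every $v\in U$ has only $\Ot(\sqrt{\Delta}/\eps)$ neighbors in $I^*$.

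With the filter in place, greedy MIS on $G[U]$ admits the standard charging argument: each greedy pick $v$ deletes $v$ and $N(v)\cap U$, which together cover at most $\nbri{v}+1 = O(\sqrt{\Delta\ln n}/\eps)$ vertices of $I^*\cap U = I^*$. Summing over picks yields $|I^*|\le |I|\cdot O(\sqrt{\Delta\ln n}/\eps)$, i.e., $|I|\ge \Omega(\eps/\sqrt{\Delta\ln n})\cdot|I^*|$; choosing $c$ of order a few units (so the Hoeffding tail beats the union bound with comfortable margin) lets the constants land inside the $\eps/12$ factor in the statement. The runtime is $O(m)$: one oracle call per vertex, a single edge scan to produce every $\nbrit{v}$, and greedy MIS in linear time. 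The main obstacle is essentially bookkeeping --- the constant $c$ (equivalently, the Hoeffding tail exponent) must be picked so that \emph{both} concentration events hold simultaneously after the $n$-vertex union bound \emph{and} the resulting algebra closes with ratio at most $12\sqrt{\Delta\ln n}/\eps$ --- but once $c$ is fixed, the remainder is routine.
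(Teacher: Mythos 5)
Your proposal is correct and follows essentially the same route as the paper: query the oracle once per vertex, filter by comparing $\nbrit{v}$ against a threshold of the form $(1/2-\eps)\deg(v)+\Theta\paren{\sqrt{\deg(v)\ln n}}$, show via concentration plus a union bound over vertices that all of $I^*$ survives while every survivor has $\nbri{v}=O(\sqrt{\Delta\ln n}/\eps)$, and then run greedy with the standard charging argument. The only (harmless) difference is that you use the additive Hoeffding bound in both tail directions, which lets you skip the paper's separate handling of low-degree vertices (its set $L$) and its reduction to $\eps\le 1/4$ with the attendant factor-$2$ loss, both of which the paper needs only because it invokes a multiplicative Chernoff bound for the $v\in I^*$ direction.
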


We dedicate the remainder of this section to the proof of \Cref{thm:persist-noise}.
We start with the assumption that $\eps\leq 1/4$ (we can do this for any constant $>0$).
This assumption is needed for technical reasons.
If $\eps>1/4$, then it is easy to simulate an oracle with $\eps=1/4$ by flipping the oracle answer with probability $p=\frac{\eps-1/4}{1/2+\eps}$ ($p\geq 0$ since $\eps>1/4$).
If we do this then the probability that the oracle gives the incorrect answer is $(1/2-\eps) + p \cdot (1/2+\eps) = 1/4$ which is exactly what we wanted.
Note that the final bound we get on the approximation factor now changes by a factor of at most $2$.
This is because when $\eps>1/4$ we are replacing it with an oracle for $\eps=1/4$ and the approximation factor linearly depends on $\eps$.

\paragraph{The algorithm and analysis.} 
We now state our algorithm.
\begin{Algorithm}
\label{alg:mis-persistent-noise}
 An algorithm for MIS in persistent noise setting.\\
\textbf{Input:} A graph $G=(V,E)$ with maximum degree $\Delta$ that contains an unknown maximum independent set $I^*$; an MIS oracle $\orc$ in the persistent noise setting \\
\textbf{Output}: A set of vertices $I$ such that $I$ forms an independent set and $\card{I}\geq \frac{\eps}{3} \cdot (\Delta \ln n)^{-0.5} \cdot \card{I^*}$.\\
\textbf{Parameters:} $s_{v}:= (1/2-\eps)\deg(v) + 6\sqrt{\ln n} \cdot (1/2-\eps) \sqrt{\deg(v)}$ . 
\smallskip
\begin{enumerate}
    \item Calculate $\nbrit{v}$ for all vertices $v\in V$.
    \item Let $L$ be the set of vertices where $\deg(v)\leq 36 \ln n$ for $v\in V$.
    \item Let $S$ be the set of vertices where $\nbrit{v} \leq s_v$ for $v\in V \setminus L$.
    \item Output the greedy MIS $I$ on $G[S \cup L]$.
\end{enumerate}
\end{Algorithm}


We first show that if $v \in I^*$, the number of ``yes'' answers in $N(v)$ cannot be too high.
\begin{claim}\label{clm:vert-in-I}
    If $v\in I^* \setminus L$ then with high probability, $\nbrit{v} \leq (1/2-\eps)\deg(v) + 6\sqrt{\ln n} \cdot (1/2-\eps) \sqrt{\deg(v)}$.
\end{claim}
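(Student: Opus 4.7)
The plan is to recognize $\nbrit{v}$ as a sum of independent Bernoulli random variables and apply a Chernoff bound. The key observation is that when $v\in I^*$, every neighbor $u\in N(v)$ lies \emph{outside} $I^*$ (since $I^*$ is independent), so the correct oracle answer on $u$ is ``no.'' Under persistent noise with independence across vertices, $\orc(u)$ therefore equals $1$ with probability exactly $1/2-\eps$, independently for each $u\in N(v)$. Hence $\nbrit{v}=\sum_{u\in N(v)}\mathbbm{1}[\orc(u)=1]$ is a sum of $\deg(v)$ independent Bernoulli random variables of parameter $1/2-\eps$, with mean $\mu := (1/2-\eps)\deg(v)$.

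Next, I would apply the multiplicative Chernoff bound: for $0<\delta\le 1$, $\Pr[\nbrit{v}\ge (1+\delta)\mu]\le \exp(-\delta^2\mu/3)$. Choose $\delta = 6\sqrt{(\ln n)/\deg(v)}$, so that $\delta\mu = 6\sqrt{\ln n}\,(1/2-\eps)\sqrt{\deg(v)}$, which matches the target slack. Since $v\notin L$, we have $\deg(v)>36\ln n$, hence $\delta<1$, so the bound applies. Plugging in, $\delta^2\mu/3 = (36\ln n/\deg(v))\cdot(1/2-\eps)\deg(v)/3 = 12(1/2-\eps)\ln n$, and using the standing assumption $\eps\le 1/4$ we get $1/2-\eps\ge 1/4$, so the exponent is at least $3\ln n$. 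Therefore
\[
\Pr\!\left[\nbrit{v} > (1/2-\eps)\deg(v) + 6\sqrt{\ln n}\,(1/2-\eps)\sqrt{\deg(v)}\right] \le n^{-3}.
\]
A union bound over all vertices $v\in I^*\setminus L$ (at most $n$ of them) yields the claim with probability at least $1-n^{-2}$, i.e., with high probability.

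I do not expect a substantive obstacle here; the only subtlety is ensuring the Chernoff regime is valid, which is precisely guaranteed by the definition of $L$ (vertices with degree at most $36\ln n$ are excluded so that $\delta \le 1$). The assumption $\eps \le 1/4$ made at the start of the section is also used to turn the exponent $12(1/2-\eps)\ln n$ into the clean $\Omega(\ln n)$ needed for the union bound. Full independence of the oracle answers across $N(v)$ is exactly the property invoked when applying Chernoff; the remark following the theorem (regarding $k$-wise independence for $k=O(\log n)$) indicates that this is the only place in the analysis where full independence is truly exploited.
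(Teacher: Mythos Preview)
Your proposal is correct and follows essentially the same approach as the paper: both recognize that $\nbrit{v}$ is a sum of $\deg(v)$ independent $\mathrm{Bernoulli}(1/2-\eps)$ variables, apply the multiplicative Chernoff bound with $\delta=6\sqrt{(\ln n)/\deg(v)}$, use $v\notin L$ to ensure $\delta\le 1$, and use $\eps\le 1/4$ to obtain the $n^{-3}$ tail. The only minor difference is that you include the union bound over vertices inside the claim, whereas the paper defers it to the sentence following the two claims.
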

\begin{proof}
    If $v\in I^*$ then $\nbri{v}=0$ which means that the expected size of \nbrit{v} is $(1/2 - \eps) \deg(v)$.
    Since we assume complete independence for the oracle we can use the Chernoff bound to get concentration.

    Let $X_i=1$ if $i^{th}$ neighbor is claimed to be in $I^*$ by the oracle where $i \in [\deg(v)]$.
    Observe that $\nbrit{v}=\sum_i X_i$ is the number of neighbors that claim to be in $I^*$.
    We know $\mu=\expect{\nbrit{v}} = (1/2 - \eps) \deg(v)$.
    Using the Chernoff (\Cref{prop:chernoff-mul}) bound with $\delta_v=6\paren{\frac{\ln n}{\deg(v)}}^{0.5} \leq 1$:
    \begin{align*}
        \prob{\nbrit{v} > (1+\delta_v)\mu} &\leq \exp\paren{-\frac{\delta_v^2 \cdot \mu}{3}} \leq n^{-3} \tag{since $\eps \leq 1/4$}.
    \end{align*}
    Notice that as $\deg(v)$ gets larger we get better concentration.
\end{proof}

Note that \Cref{clm:vert-in-I} does \emph{not} rule out the case that a vertex $v\in V\setminus I^*$ and has very few neighbors in $I^*$. Nevertheless, it tells us that if we simply eliminate the vertices that ``block'' a large number of neighbors in $I^*$, we will not mistakenly drop vertices in $I^*$.

Next, we show that if a vertex $v$ has many neighbors in $I^*$ i.e. \nbri{v} is large then \nbrit{v} should also be large and hence we should be able to detect such a vertex $v \not\in I^*$.
\begin{claim}\label{clm:vert-not-in-I}
    If $v\not\in I^*$ and $\nbri{v}\geq (3/\eps) \sqrt{\ln n} \sqrt{\deg(v)}$ then with high probability, $\nbrit{v} > (1/2-\eps)\deg(v) + 6\sqrt{\ln n} \cdot (1/2-\eps) \sqrt{\deg(v)}$.
\end{claim}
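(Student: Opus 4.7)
The plan is a direct concentration argument, parallel to \Cref{clm:vert-in-I}, but now exploiting the fact that a vertex $v\notin I^*$ with many neighbors in $I^*$ has an \emph{elevated} expected number of ``yes'' answers in its neighborhood. First I would write $\nbrit{v} = \sum_{u\in N(v)} X_u$, where $X_u$ is the indicator that the oracle reports $u\in I^*$. By independence of oracle answers across vertices, the $X_u$'s are independent Bernoullis. Since each $u\in N_{I^*}(v)$ contributes $\Pr[X_u=1]=1/2+\eps$ and each $u\in N(v)\setminus I^*$ contributes $\Pr[X_u=1]=1/2-\eps$, one gets
\[
\mu := \Exp[\nbrit{v}] \;=\; (1/2-\eps)\deg(v) + 2\eps\cdot \nbri{v}.
\]

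Using the hypothesis $\nbri{v} \geq (3/\eps)\sqrt{\ln n}\sqrt{\deg(v)}$, this immediately yields
\[
\mu \;\geq\; (1/2-\eps)\deg(v) + 6\sqrt{\ln n}\,\sqrt{\deg(v)}.
\]
Let $T := (1/2-\eps)\deg(v) + 6\sqrt{\ln n}\,(1/2-\eps)\sqrt{\deg(v)}$ be the threshold in the claim. Then the gap between $\mu$ and $T$ is at least
\[
\mu - T \;\geq\; 6\sqrt{\ln n}\,\bigl(1 - (1/2-\eps)\bigr)\sqrt{\deg(v)} \;=\; 6\,(1/2+\eps)\sqrt{\ln n}\,\sqrt{\deg(v)}.
\]

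Next I would apply the multiplicative lower-tail Chernoff bound to $\nbrit{v}$ with $\delta := (\mu - T)/\mu \in (0,1]$ (the upper bound $\delta\le 1$ holds because $T\ge 0$). This gives
\[
\Pr[\nbrit{v}\leq T] \;=\; \Pr\!\left[\nbrit{v}\leq (1-\delta)\mu\right] \;\leq\; \exp\!\left(-\delta^2\mu/2\right).
\]
To bound $\delta^2\mu = (\mu-T)^2/\mu$, I use the trivial upper bound $\mu \leq (1/2+\eps)\deg(v)$, which gives
\[
\delta^2\mu \;\geq\; \frac{36(1/2+\eps)^2\ln n\cdot \deg(v)}{(1/2+\eps)\deg(v)} \;=\; 36(1/2+\eps)\ln n,
\]
so the failure probability is at most $n^{-18(1/2+\eps)} \leq n^{-9}$, which is high-probability and leaves ample slack for the union bound over all $v\in V\setminus L$ used later in the analysis of \Cref{alg:mis-persistent-noise}.

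There is no substantive obstacle; the only subtle point is bookkeeping around the $(1/2-\eps)$ factor in the threshold $T$. Specifically, the $(1/2-\eps)$ factor in $T$ (rather than a plain $1$) is exactly what produces the positive gap $\mu - T \geq 6(1/2+\eps)\sqrt{\ln n}\sqrt{\deg(v)}$; had the threshold been $\mu$ itself, we would have no slack to concentrate against. I would highlight this cancellation explicitly to make clear why the oracle bias $\eps$ enters the hypothesis as $(3/\eps)\sqrt{\ln n}\sqrt{\deg(v)}$: the factor $2\eps\cdot\nbri{v}$ in $\mu$ must dominate the $6\sqrt{\ln n}\sqrt{\deg(v)}$ fluctuation window, which forces $\nbri{v}\gtrsim (\sqrt{\ln n/\deg(v)})/\eps \cdot \deg(v)$.
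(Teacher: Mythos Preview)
Your proposal is correct and follows essentially the same strategy as the paper: compute $\mu=(1/2-\eps)\deg(v)+2\eps\,\nbri{v}$, observe that the hypothesis on $\nbri{v}$ forces a gap of order $\sqrt{\ln n \cdot \deg(v)}$ between $\mu$ and the threshold, and apply a Chernoff-type tail bound. The only cosmetic difference is that the paper uses the additive Hoeffding bound (\Cref{prop:chernoff}) with $t=\eps\,\nbri{v}$ directly, giving $\Pr[\nbrit{v}<\mu-t]\le\exp(-2\eps^2\nbri{v}^2/\deg(v))\le n^{-3}$, whereas you route through the multiplicative form and the upper bound $\mu\le(1/2+\eps)\deg(v)$; both arrive at a polynomially small failure probability.
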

\begin{proof}
    If $v\not\in I^*$ and $\nbri{v}= k$ then the expected size of \nbrit{v} is 
    \[
        \mu=\expect{\nbrit{v}} = k(1/2+\eps) + (\deg(v)-k) (1/2-\eps) = (1/2-\eps) \deg(v) + 2\eps k.
    \]
    We now use the Chernoff bound (\Cref{prop:chernoff}) with $t=\eps k$ to get concentration:
    \begin{align*}
        \prob{\nbrit{v} < \mu - t} &\leq \exp\paren{-2t^2/\deg(v)} \\
        &= \exp\paren{-2\eps^2 k^2/\deg(v)} \\
        &\leq n^{-3}. \tag{using the lower bound on k}     
    \end{align*}
    Thus, with high probability we have:
    \begin{align*}
    \nbrit{v} &\geq \mu - \eps k \\
    &= (1/2-\eps)\deg(v) +\eps k \\
    &= (1/2-\eps)\deg(v) +3\sqrt{\ln n} \sqrt{\deg(v)} \\
    &> (1/2-\eps)\deg(v) + 6\sqrt{\ln n} \cdot (1/2-\eps) \sqrt{\deg(v)} \, .
    \end{align*}
\end{proof}

We can conclude that the events in \Cref{clm:vert-in-I} and \Cref{clm:vert-not-in-I} happen with high probability by a union bound over all vertices.

\begin{proof}[\textbf{Finalizing the proof of \Cref{thm:persist-noise}}]
    Calculating $\nbrit{v}$ for all vertices $v\in V$ and finding set $S$ takes $O(m)$ time.
    The greedy MIS can also be computed in $O(m)$ time.

    We first condition on the events in \Cref{clm:vert-in-I} and \Cref{clm:vert-not-in-I} for all vertices (this happens with high probability).
    Notice that for all vertices in $v \in S$ we have $\nbrit{v} \leq s_v$.
    By \Cref{clm:vert-in-I} all vertices in $I^*$ are in $S$.
    By \Cref{clm:vert-not-in-I} we know that any non-MIS vertices $v$ that are in $S$ have $\nbri{v}\leq (3/\eps) \sqrt{\ln n} \sqrt{\deg(v)} \leq (6/\eps) \sqrt{\Delta \ln n}$.
    Also, vertices in $L$ have $\nbri{v}\leq \deg(v) =  \sqrt{\deg(v)} \cdot \sqrt{\deg(v)} \leq  \sqrt{\Delta} \sqrt{36\ln n} \leq (6/\eps) \sqrt{\Delta \ln n}$.

    This means that when we run the greedy MIS algorithm and pick a non-MIS vertex, we eliminate at most $(6/\eps) \sqrt{\Delta \ln n}$ vertices in $I^*$.
    Thus, we have $\card{I}\geq \frac{\eps}{6} \cdot (\Delta \ln n)^{-0.5} \cdot \card{I^*}$.
    Finally, because of the assumption on $\eps$ ($\eps \leq 1/4$), we lose a factor of at most $2$ in the approximation, giving us the final bound $\card{I}\geq \frac{\eps}{12} \cdot (\Delta \ln n)^{-0.5} \cdot \card{I^*}$.
\end{proof}

\begin{remark}
    We assume that we have complete independence between the oracle queries for the vertices. But we can get essentially the same result (up to constants) when the oracle answers the queries using a $k$-wise independent hash function instead of a completely random function for $k=O(\log n)$.

    This holds because we use \Cref{prop:chernoff-limited} with $k=O(\log n)$ instead of the Chernoff bound (\Cref{prop:chernoff-mul}). The min in the exponent always picks the second term because $k$ is large enough and so we get something very similar to the Chernoff bound in \Cref{prop:chernoff-mul} where the exponent only differs by some constants. Thus, the approximation we get will be a small constant factor worse but will remain the same asymptotically.
\end{remark}


\section{An Algorithm in the Non-persistent Noise Setting}
\label{sec:alg-mabs}
In this section, we consider algorithms in the \emph{non-persistent noise setting (MABs setting)} of the MIS oracle, i.e., the algorithm can access the learning-augmented MIS oracle with \emph{fresh randomness} for each query of a vertex $v$. The formal statement of our main result in this setting is as follows.
\begin{theorem}
\label{thm:MABs-setting}
There exists a randomized algorithm that given a parameter $\delta \in (0,1)$ and
\begin{enumerate}[label=\alph*)]
\item an input graph $G=(V,E)$ with a maximum independent set $I^*$; and 
\item an MIS oracle $\orc$ in the \emph{non-persistent noise setting},
\end{enumerate}
with probability at least $(1-\delta)$, in $O(m\log{n})$ time and $\frac{30 n}{\eps^2} \cdot \log{\frac{1}{\delta}}$ total queries to $\orc$, computes a set $I$ such that $\card{I}\geq \frac{48}{50} \cdot \card{I^*}$.
\end{theorem}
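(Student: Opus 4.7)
The plan is to instantiate the two-phase strategy sketched in the technical overview: a multi-round \emph{elimination phase} that shrinks the candidate pool so its vast majority lies in $I^*$, followed by a \emph{matching-based trim} that peels off the residual non-MIS vertices at the cost of only a tiny fraction of true MIS vertices.

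For Phase~1, I initialize $\tilde I \leftarrow V$ and fix constants $c = \Theta(\log(1/\delta)/\eps^2)$ and a round count $R$ set by geometric-doubling (to handle the unknown $\card{I^*}$). In round $r = 1,\ldots,R$, for every surviving $v \in \tilde I$ I issue $c$ fresh queries to $\orc(v)$, update the running empirical yes-fraction $\hat p_v^{(r)}$ across its cumulative $rc$ samples, and remove $v$ from $\tilde I$ whenever $\hat p_v^{(r)} < 1/2$. For Phase~2, I compute any maximal matching $M$ in $G[\tilde I]$ greedily in $O(m)$ time and return $I := \tilde I \setminus V(M)$; $V(M)$ is a vertex cover of $G[\tilde I]$, so $I$ is automatically independent.

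The analysis reduces to three claims. \textbf{(Survival)} Hoeffding applied to the running sum gives, for each $v \notin I^*$, $\Pr[v \text{ survives round }r] \le e^{-2rc\eps^2}$; for each $v \in I^*$, a union bound over the $R$ rounds gives $\Pr[v \text{ eliminated}] \le \sum_{r\ge 1} e^{-2rc\eps^2}$, which is forced $\le 1/200$ by tuning the hidden constant in $c$. A Chernoff bound on the $\card{I^*}$ independent elimination indicators (oracle randomness is fresh per query and per vertex) then yields $\card{\tilde I \cap I^*} \ge (1 - 1/100)\card{I^*}$ with probability $\ge 1-\delta/2$. \textbf{(Query budget)} Writing $r_v$ for the number of rounds $v$ participates in, the telescoping estimate $\Exp[r_v] \le \sum_{r \ge 0} e^{-2rc\eps^2} = O(1)$ for $v \notin I^*$ gives an aggregate non-MIS cost of $O(cn) = O(n\log(1/\delta)/\eps^2)$; the MIS cost is at most $cR\card{I^*}$, which once $R$ is tightened to $\Theta(\log(n/\card{I^*})/\eps^2)$ becomes $O(\card{I^*}\log(n/\card{I^*})\log(1/\delta)/\eps^2) = O(n\log(1/\delta)/\eps^2)$ using $x\log(n/x) \le n/e$ on $[0,n]$. \textbf{(Approximation)} The choice of $Rc$ makes Hoeffding yield $\card{\tilde I \setminus I^*} \le n \cdot e^{-2Rc\eps^2} \le (1/50)\card{I^*}$ with probability $\ge 1-\delta/2$. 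Because $I^*$ is independent, every edge of $M$ has $\ge 1$ endpoint outside $I^*$, so $\card{M} \le \card{\tilde I \setminus I^*}$ and therefore $\card{V(M) \cap I^*} \le \card{M} \le (1/50)\card{I^*}$. Combining,
\[
\card{I} \;\ge\; \card{\tilde I \cap I^*} - \card{V(M) \cap I^*} \;\ge\; \Paren{\tfrac{99}{100} - \tfrac{1}{50}}\card{I^*} \;\ge\; \tfrac{48}{50}\card{I^*}.
\]

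The main obstacle is calibrating the additive schedule so that three competing goals hold simultaneously: (a) almost every MIS vertex survives all $R$ rounds, (b) almost every non-MIS vertex is eliminated, and (c) the sample budget remains $O(n\log(1/\delta)/\eps^2)$ regardless of $\card{I^*}$. The crucial structural observation that reconciles (a)--(c) is that samples on non-MIS vertices telescope to $O(cn)$ thanks to exponentially decaying per-round survival probabilities, while samples on MIS vertices are exactly $cR\card{I^*}$, whose worst case in the unknown parameter $\card{I^*}$ is controlled by $x\log(n/x) \le n/e$. A secondary technical point --- the unknown value of $\card{I^*}$ --- is handled by geometric-doubling on $R$, with the total budget staying within the target by one more telescoping argument; the additive (rather than multiplicative) schedule is what prevents an extra $\polylog(n)$ blow-up on the surviving MIS vertices that the classical \cite{Even-DarMM02} elimination would incur.
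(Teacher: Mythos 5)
Your overall strategy is the same as the paper's: an elimination phase with a linearly growing (additive) sample schedule, a telescoping argument showing the non-MIS vertices cost only $O(cn)$ queries, the bound $x\log(n/x)\le n/e$ to control the MIS-side cost uniformly in the unknown $\card{I^*}$, and the observation that every edge of a maximal matching in $G[\tilde I]$ has an endpoint outside $I^*$, so deleting $V(M)$ removes at most $\card{\tilde I\setminus I^*}$ true MIS vertices (the paper phrases this via a $2$-approximate vertex cover in \Cref{lem:matching-mis}, which is the same argument). The genuine gap is in how the number of rounds is chosen. You set $R=\Theta(\log(n/\card{I^*}))$, a quantity the algorithm cannot compute, and you defer the issue to ``geometric doubling on $R$'' without giving any stopping or verification criterion: there is no test in your scheme that certifies the current pool $\tilde I$ has residual non-MIS mass below $\card{I^*}/50$, so it is unclear when doubling should stop. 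If instead doubling degenerates into ``run until the query budget is exhausted and trim once at the end,'' that can be made to work, but it needs two points you have not argued: (i) the loss of MIS vertices must be bounded over \emph{all} rounds actually executed, not just the first $R^*$ (your infinite sum does cover this), and (ii) one must prove that, with probability $1-O(\delta)$, the budget suffices to complete at least $R^*=\Theta(\log(n/\card{I^*}))$ rounds before termination --- this is precisely statement $iii)$ of the paper's \Cref{lem:vertex-elim}, and without it the single end-of-run trim may be applied to a pool that still contains far too many non-MIS vertices. The paper sidesteps the unknown $\card{I^*}$ differently and more cleanly: it performs the vertex-cover/trim step after \emph{every} elimination round, keeps the largest independent set found so far, and terminates purely on the query budget, so no estimate of $\card{I^*}$ or of the correct round index is ever needed.

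A secondary slip concerns the probability bookkeeping. To get $\card{\tilde I\cap I^*}\ge (1-\nicefrac{1}{100})\card{I^*}$ with probability $1-\delta/2$ you tune the per-vertex elimination probability to $1/200$ and then apply a Chernoff bound over the $\card{I^*}$ indicators; when $\card{I^*}$ is small (say constant) and $\delta$ is tiny, this gives failure probability $e^{-\Omega(\card{I^*})}$, not $\delta/2$. The fix is the paper's route: make the per-vertex elimination probability itself scale with $\delta$ (your $c=\Theta(\log(1/\delta)/\eps^2)$ already yields $\sum_{r\ge 1}e^{-2rc\eps^2}=O(\delta)$ with the right constant) and then apply Markov to the expected number of eliminated MIS vertices, exactly as in \Cref{clm:mabs-elimination-bounds} and \Cref{lem:vertex-elim}. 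The same remark applies to your bounds on $\card{\tilde I\setminus I^*}$ and on the non-MIS query cost, which are statements about sums of many low-probability events and need Markov with a $\delta$-scaled expectation rather than ``Hoeffding with high probability.''
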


We dedicate the remainder of this section to the proof of \Cref{thm:MABs-setting}.

\paragraph{The algorithm.} As we have discussed in our high-level overview, our algorithm proceeds in two phases. In the first phase, our algorithm focuses on eliminating most of the vertices in the non-MIS vertex set. 
Then, in the second phase, we show that a good approximation to vertex cover is enough to get a good approximation to the independent set.
We can easily find a $2$-approximate vertex cover in $O(m)$ time by computing a maximal matching and picking all its endpoints.
The detailed description of the algorithm is as follows.

\begin{Algorithm}
\label{alg:mis-bandit}
An algorithm for MIS in non-persistent noise setting. \\
\textbf{Input:} A graph $G=(V,E)$ that contains an unknown maximum independent set $I^*$; an MIS oracle $\orc$ in the multi-armed bandit setting; a confidence parameter $\delta\in (0,1)$. \\
\textbf{Output}: A set of vertices $I$ such that $I$ forms an independent set and $\card{I}= O(\card{I^*})$.\\
\textbf{Parameters:} $q_{r}=\frac{4}{\eps^2} \cdot (r+\log{\frac{1}{\delta}})$.
\smallskip
\begin{itemize}
\item Maintain a set of $V_r$ with the initialization $V_0 \gets V$.
\item For $r=1$ to $\infty$, do the following:
\begin{enumerate}
\item \textbf{Elimination phase:}
\begin{itemize}
\item For each vertex $v\in V_{r-1}$:
\begin{enumerate}
\item Query $v$ for $q_r$ times.
\item \emph{Remove} $v$ from $V_{r-1}$ if the number of $1$ returned by $\orc(v)$ (``yes'' answers) is less than $q_r/2$.
\end{enumerate}
\item Let the updated vertex set be $V_r$, i.e., $V_r$ is a subset of vertices of $V_{r-1}$ that gets at least $q_r/2$ ``yes'' answers from \orc(v).
\end{itemize}
\item \textbf{Vertex Cover phase:}
\begin{enumerate}
\item Compute a $2$-approximate vertex cover $U_r$ of the induced subgraph $G[V_r]$.
\item Let $I_r\gets V_r \setminus U_r$.
\end{enumerate}
\item Maintain the set $I$ with the maximum size among all $I_r$'s, i.e., let $I\gets I_r$ if $I_r$ is larger than $I$ and keep $I$ unchanged otherwise. 
\item If the total number of $\orc$ queries is more than $30 \cdot \frac{n}{\eps^2} \cdot \log{\frac{1}{\delta}}$ then terminate and return the currently maintained $I$.
\end{enumerate}
\end{itemize}
\end{Algorithm}

Note that since we do \emph{not} necessarily know the actual size of $I^*$, we compute a vertex cover after every elimination phase and simply output the independent set with the largest size throughout the process.

\paragraph{The analysis.} We now proceed to the analysis of the algorithm. 
Before diving into the main lemmas, we first show some straightforward technical claims that characterize the behavior of the MIS and non-MIS vertices in the elimination phase. We first show that the probabilities of an MIS vertex being eliminated and a non-MIS vertex surviving in round $r$ are both small.
\begin{claim}
\label{clm:mabs-elimination-bounds}
The following statements are true:
\begin{enumerate}
\item Let $v\in V_{r-1} \cap I^*$; then, the probability that $v$ is removed from $V_{r}$ is at most $\frac{1}{100}\cdot \frac{\delta}{4^r}$.
\item Let $v\in V_{r-1} \setminus I^*$; then, the probability that $v$ is \emph{not} removed from $V_{r}$ is at most $\frac{1}{100}\cdot \frac{\delta}{4^r}$.
\end{enumerate}
\end{claim}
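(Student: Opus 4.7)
The plan is to observe that both parts reduce to a single concentration calculation on a sum of independent Bernoulli trials, and then to verify that the chosen $q_r = \frac{4}{\eps^2}(r + \log{\tfrac{1}{\delta}})$ makes the resulting tail probability small enough to hit the claimed $\frac{1}{100} \cdot \frac{\delta}{4^r}$.

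First, I would fix a vertex $v \in V_{r-1}$ and let $Y_v$ denote the number of ``yes'' responses returned across the $q_r$ queries to $\orc(v)$ in round $r$. Crucially, because we are in the non-persistent noise setting, the $q_r$ responses use fresh randomness, so $Y_v$ is a sum of $q_r$ independent Bernoulli random variables, with success probability $1/2 + \eps$ if $v \in I^*$ and $1/2 - \eps$ if $v \notin I^*$. In both cases, the elimination threshold $q_r/2$ sits at \emph{exactly} distance $q_r \eps$ from $\Exp[Y_v]$: below the mean when $v \in I^*$ (so removal requires a downward deviation of at least $q_r \eps$), and above the mean when $v \notin I^*$ (so survival requires an upward deviation of at least $q_r \eps$). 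This symmetry is what allows a single calculation to cover both bullets.

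Next I would apply an additive Chernoff/Hoeffding bound of the same type used in the proof of \Cref{clm:vert-not-in-I} with deviation $t = q_r \eps$. In part~(1) this gives $\Pr[v \text{ removed}] = \Pr[Y_v < \Exp[Y_v] - q_r\eps] \leq \exp(-2 q_r \eps^2)$, and in part~(2) it gives $\Pr[v \text{ not removed}] = \Pr[Y_v \geq \Exp[Y_v] + q_r\eps] \leq \exp(-2 q_r \eps^2)$. Plugging in the definition of $q_r$ turns the common right-hand side into
\[
\exp\!\paren{-2 \cdot \tfrac{4}{\eps^2}(r + \log{\tfrac{1}{\delta}}) \cdot \eps^2} \;=\; \exp(-8r)\cdot \delta^{8}.
\]

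The last step is a purely numerical check that this quantity is at most $\frac{1}{100}\cdot\frac{\delta}{4^r}$. Since $\delta \leq 1$, we have $\delta^{8} \leq \delta$; and since $e^{8}/4 > 700$, for every $r \geq 1$ we get $e^{-8r} \leq (4/e^{8})^{r} \cdot 4^{-r} \leq \frac{1}{100}\cdot 4^{-r}$. Combining the two yields the claimed bound. There is no substantive obstacle in this claim: it is a routine concentration computation, and its role is simply to certify that the schedule $q_r$ has been calibrated so that the subsequent union bounds over vertices and rounds (used in the main lemmas for the elimination phase and for bounding total query complexity) will all go through with the stated $4^{-r}$ decay.
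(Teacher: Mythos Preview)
Your proposal is correct and essentially identical to the paper's own proof: both apply the additive Chernoff--Hoeffding bound with deviation $t=\eps q_r$ to the sum of $q_r$ fresh Bernoulli responses, obtain the common tail $\exp(-2\eps^2 q_r)=\exp(-8r-8\log(1/\delta))$, and then do a short numerical check (the paper splits $e^{-8r}\le e^{-6}\cdot e^{-2r}$ while you write $e^{-8r}=(4/e^{8})^{r}\cdot 4^{-r}$) to land on $\tfrac{1}{100}\cdot\tfrac{\delta}{4^{r}}$. Your explicit observation that the two parts are symmetric around the threshold $q_r/2$ is a nice economy over the paper's version, which simply repeats the computation.
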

\begin{proof}
We prove this claim by applying the Chernoff bound in \Cref{prop:chernoff}. 
For any vertex $v\in I^*$, let the random variable $X_v^i=1$ if the $i^{th}$ query for vertex $v$ is a ``yes'' and $X_v^i=0$ otherwise for $i\in [q_r]$. 
Observe that $X_v = \sum_i X_v^i$ is the number of ``yes'' answers returned by $\orc(v)$ out of the $q_r$ queries. 
Clearly, we have that $\expect{X_v}= (1/2+\eps) \cdot q_r$, and $X_v$ is a summation of the independent indicator random variables so, we can apply \Cref{prop:chernoff} to show that
\begin{align*}
\Pr\paren{X_v < \frac{q_r}{2}}& = \Pr\paren{X_v - \expect{X_v} \leq -\eps \cdot q_r}\\
&\leq \exp\paren{-2 \cdot\eps^2 \cdot q_r} \tag{applying \Cref{prop:chernoff}}\\
&= \exp\paren{- 8r - 8\log{\frac{1}{\delta} }} \tag{by the definition of $q_r$}\\
& \leq \exp\paren{-6} \cdot \exp\paren{- 2 r} \cdot \exp \paren{- 8\log{\frac{1}{\delta}}} \\ &\leq \frac{1}{100} \cdot \frac{\delta}{4^r}.
\end{align*}
Note that the vertices in $v\in I^* \cap V_{r-1}$ are in $I^*$. Therefore, we can get the desired statement for $v \in I^* \cap  V_{r-1}$.

We can similarly define $Y_v$ for the number of ``yes'' answers returned by $\orc(v)$ with $q_r$ queries for a vertex $v \in V \setminus I^*$. Here, we have that $\expect{Y_v}= (1/2-\eps) q_r$. As such, we have that 
\begin{align*}
\Pr\paren{Y_v \geq \frac{q_r}{2}}& = \Pr\paren{Y_v - \expect{Y_v} \geq \eps \cdot q_r}\\
&\leq \exp\paren{-2 \cdot\eps^2 \cdot q_r} \tag{applying \Cref{prop:chernoff}}\\
&= \exp\paren{- 8r - 8\log{\frac{1}{\delta} }} \tag{by the definition of $q_r$}\\
& \leq \exp\paren{-6} \cdot \exp\paren{- 2 r} \cdot \exp \paren{- 8\log{\frac{1}{\delta}}} \\ &\leq \frac{1}{100} \cdot \frac{\delta}{4^r}.
\end{align*}
This gives us the desired statement for $v\in V_{r-1}\setminus I^*$ as well. \myqed{\Cref{clm:mabs-elimination-bounds}}
\end{proof}

We now prove the main technical lemma of our algorithm that helps eventually prove \Cref{thm:MABs-setting}.
In what follows, we will denote the size of $I^*$ as $\alpha n$ for some $\alpha\in (0,1)$. Our main lemma for the elimination phase is as follows.
\begin{lemma}
\label{lem:vertex-elim}
Let $\card{I^*} = \alpha n$ for some $\alpha \in (0,1)$ and $\tilde{r}=1 + \log{\frac{1}{\alpha}}$. With probability at least $1-\delta$, the following statements about \Cref{alg:mis-bandit} are true:
\begin{enumerate}[label=\roman*).]
\item The number of vertices in $V_{\tilde{r}}$ that are \emph{not} in $I^*$ is at most $\nicefrac{\alpha n}{100}$, i,e, 
\[\card{V_{\tilde{r}} \setminus I^*}\leq \frac{\alpha n}{100}.\]
\item The number of vertices in $V_{\tilde{r}}$ that are in $I^*$ is at least $\nicefrac{49}{50} \cdot \alpha n$, i.e.,
\[\card{V_{\tilde{r}} \cap I^*}\geq \frac{49}{50}\cdot \alpha n.\]
\item The \emph{total} number of $\orc$ queries in the first $\tilde{r}$ rounds is at most $\nicefrac{30 n}{\eps^2} \cdot \log{1/\delta}$, i.e., 
\[\sum_{r=1}^{\tilde{r}} \card{V_{r-1}} \cdot q_r \leq 30\cdot \frac{n}{\eps^2} \cdot \log{\frac{1}{\delta}}.\]
\end{enumerate}
Note that in the above, $ \card{V_{r-1}} \cdot q_r$ is exactly the number of queries used in round $r$.
\end{lemma}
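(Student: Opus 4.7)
I will prove the three statements separately by combining Claim~\ref{clm:mabs-elimination-bounds} with linearity of expectation and Markov's inequality, and then take a union bound so that the total failure probability is at most $\delta$. Throughout I write $L := \log(1/\delta)$. The two tail bounds from Claim~\ref{clm:mabs-elimination-bounds} are the only stochastic input: any $v \in I^*$ is removed in round $r$ with probability at most $\tfrac{\delta}{100 \cdot 4^r}$, and any $v \notin I^*$ survives round $r$ conditional on being in $V_{r-1}$ with probability at most $\tfrac{\delta}{100 \cdot 4^r}$. Crucially, the noise is independent across vertices, so $|V_{r-1} \cap I^*|$ and $|V_{r-1} \setminus I^*|$ are sums of independent indicators.

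\textbf{Parts (i) and (ii).} By linearity, $\expect{|I^* \setminus V_{\tilde{r}}|} \leq \alpha n \sum_{r \geq 1} \tfrac{\delta}{100 \cdot 4^r} \leq \tfrac{\alpha n \delta}{300}$, so Markov's inequality gives that the number of MIS vertices lost exceeds $\alpha n/50$ with probability at most $\delta/6$, which proves (ii). For (i) I will use the coarse estimate $\Pr[v \in V_{\tilde{r}}] \leq \tfrac{\delta}{100 \cdot 4^{\tilde{r}}}$ for each $v \notin I^*$, obtained by conditioning only on the last round's elimination event. Since $\log 4 > 1$, the choice $\tilde{r} = 1 + \log(1/\alpha)$ yields $4^{\tilde{r}} = 4 \cdot (1/\alpha)^{\log 4} \geq 4/\alpha$, and linearity gives $\expect{|V_{\tilde{r}} \setminus I^*|} \leq \tfrac{\alpha n \delta}{400}$; Markov's inequality bounds the probability that this exceeds $\alpha n / 100$ by $\delta/4$.

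\textbf{Part (iii).} A naive Markov bound on the total expected query count does \emph{not} suffice here because $\expect{Q}$ is itself a constant times $nL/\eps^2$, comparable to the target budget of $30 nL/\eps^2$. My plan is to split $Q = \sum_{r=1}^{\tilde{r}} q_r |V_{r-1}|$ into three pieces and argue separately. First, since $|V_{r-1} \cap I^*| \leq \alpha n$ deterministically, the MIS contribution $\sum_r q_r |V_{r-1} \cap I^*|$ is at most $\alpha n \sum_{r=1}^{\tilde{r}} q_r$, which is $O(nL/\eps^2)$ via the elementary estimates $\alpha \tilde{r}^2 \leq 4/e$ and $\alpha \tilde{r} \leq 1$ valid for $\tilde{r}=1+\log(1/\alpha)$ and $\alpha \in (0,1]$. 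Second, the round-$1$ contribution from non-MIS vertices is the deterministic quantity $q_1(1-\alpha)n = O(nL/\eps^2)$. Third, the residual $T := \sum_{r \geq 2} q_r |V_{r-1} \setminus I^*|$ has expectation at most $\tfrac{n\delta}{100}\sum_{r \geq 2} (4/\eps^2)(r+L)/4^{r-1} = O(n\delta L/\eps^2)$, the sum being a convergent geometric-type series. The extra factor of $\delta$ in $\expect{T}$ is what saves us: Markov now yields $T \leq nL/\eps^2$ with failure probability $O(\delta)$. Summing the three pieces with explicit constants gives $Q \leq 30 nL/\eps^2$ with failure probability at most, say, $\delta/25$.

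\textbf{Main obstacle and wrap-up.} The principal difficulty is precisely the mismatch in part (iii) between the expected total queries and the target query budget, which rules out a single Markov bound. Isolating the pieces of $Q$ whose upper bound is deterministic ($\alpha n$-many MIS survivors and the first-round pass), and applying Markov only to the stochastic non-MIS tail $T$ whose expectation is inherently scaled by $\delta$, closes the gap. A union bound over the three events with total failure probability $\delta/6 + \delta/4 + \delta/25 < \delta$ then completes the proof.
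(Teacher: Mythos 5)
Your proof is correct and follows the same skeleton as the paper's: \Cref{clm:mabs-elimination-bounds} plus linearity of expectation and Markov for parts (i) and (ii) (with the same constants $\delta/4$ and $\delta/6$), and a split of the query count into MIS and non-MIS contributions for part (iii). The genuine difference is in how you account for the non-MIS queries in (iii). The paper bounds the expectation of \emph{all} non-MIS queries at once, charging the queries of round $i$ to the probability of surviving \emph{through} round $i$ (i.e.\ it uses $\prod_{j\le i}\frac{\delta}{100\cdot 4^j}$), so in its calculation even the first-round cost $q_1(1-\alpha)n=\Theta\!\left(n(1+\log(1/\delta))/\eps^2\right)$ ends up scaled by $\delta/400$, after which a single Markov bound at threshold $\frac{2n}{5\eps^2}\log(1/\delta)$ is applied; strictly speaking the queries of round $i$ are incurred by membership in $V_{i-1}$, not $V_i$, so this is an off-by-one in the accounting. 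Your three-way decomposition --- deterministic MIS cost bounded via $\alpha\tilde r\le 1$, $\alpha\tilde r^2=O(1)$; deterministic round-one non-MIS cost; and a stochastic tail $T$ over rounds $r\ge 2$ whose expectation genuinely carries a factor of $\delta$ --- handles exactly this point correctly, and your opening observation that a naive Markov bound on the whole query count cannot work (its expectation is comparable to the budget) is the right diagnosis. Your use of the single-round bound $\Pr[v\in V_{\tilde r}]\le \frac{\delta}{100\cdot 4^{\tilde r}}$ in (i) is a mild simplification of the paper's product over all rounds and is equally valid. One caveat, shared with the paper rather than a gap relative to it: the explicit constants (your $\delta/25$ for $T$ and the total $30n\log(1/\delta)/\eps^2$) only close when $\log(1/\delta)$ is bounded below by a constant, since the deterministic first-round cost alone is about $4n(1+\log(1/\delta))/\eps^2$; for $\delta$ close to $1$ the stated budget cannot be met by any argument, so this is a constant-factor issue in the lemma statement itself.
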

\begin{proof}
We prove the statements in order.
\paragraph{Proof of $i)$.} Note that by \Cref{clm:mabs-elimination-bounds}, the probability that a vertex in $V \setminus I^*$ survives round $r$ is at most $\frac{1}{100}\cdot \frac{\delta}{4^r}$. As such, we have that
\begin{align*}
\expect{\card{V_{\tilde{r}} \setminus I^*}} & = \sum_{v \in V_{\tilde{r}-1}\setminus I^*} \Pr\paren{\text{$v$ survives round $\tilde{r}$}} \\
& = \sum_{v \in V \setminus I^*} \Pr\paren{\text{$v$ survives all rounds till $\tilde{r}$}}\\
& = \sum_{v \in V \setminus I^*} \prod_{i=1}^{\tilde{r}} \Pr\paren{\text{$v$ survives round $i \mid$  $v$ survives all rounds till $i-1$}} \tag{All rounds are independent}\\
& \leq \sum_{v \in V \setminus I^*} \prod_{i=1}^{\tilde{r}} \frac{\delta}{100} \cdot \frac{1}{4^i} \\
& \leq n \cdot \paren{\frac{\delta}{100}}^{\tilde{r}} \cdot \paren{\frac{1}{4}}^{\binom{\tilde{r}}{2}} \\
& \leq \frac{\delta n}{100} \cdot \paren{\frac{1}{4}}^{\tilde{r}} \\
& \leq \frac{\alpha \cdot n \cdot \delta}{400}. \tag{using $\alpha\in (0,1)$}
\end{align*}
Therefore, by Markov inequality, we have
\begin{align*}
\Pr\paren{\card{V_r \setminus I^*} > \frac{\alpha n}{100}} 
\leq \frac{\delta}{4}
\end{align*}
as desired.
\paragraph{Proof of $ii)$.} By \Cref{clm:mabs-elimination-bounds}, the probability that a vertex $v$ is eliminated in round $r$ is at most $\frac{\delta}{100}\cdot \frac{1}{4^r}$. 
We analyze the number of vertices in $I^*$ that are eliminated by round $r$. We can show that the expected value is 
\begin{align*}
\expect{\card{I^* \setminus V_{\tilde{r}}}} & = \sum_{v \in I^*} \Pr\paren{\text{$v$ is eliminated by round ${\tilde{r}}$}}\\
&\leq \sum_{v \in I^*} \sum_{i=1}^{\tilde{r}} \Pr\paren{\text{$v$ is eliminated in round $i$}} \tag{Union Bound}\\
&\leq \sum_{v \in I^*} \sum_{i=1}^{\tilde{r}} \frac{\delta}{100} \cdot \frac{1}{4^i}\\
&\leq (\alpha n) \cdot \frac{\delta}{100} \cdot \frac{1}{3} \tag{Geometric Sum}.
\end{align*}
Therefore, by a simple Markov bound, we have that
\begin{align*}
\Pr\paren{\card{I^* \setminus V_r} > \frac{\alpha n}{50}} &\leq \frac{\delta}{6}.
\end{align*}
Thus, with probability at least $1-\delta/6$ we have $\card{I^* \cap V_{\tilde{r}}} \geq \frac{49}{50} \cdot \alpha n$.

\paragraph{Proof of $iii)$.} 
Note that we are proving this bound holds even if we remove the termination condition from the algorithm. This will show that we will reach round $\tilde{r}$ with high probability.
We first condition on the events in the proofs of $i)$ and $ii)$. Note that, unlike the standard analysis of elimination-based algorithms, here, we cannot directly upper-bound the total number of queries each round. Instead, we separately analyze the number of queries induced by the vertices in $I^*$ and $V\setminus I^*$.

We first analyze the number of queries induced by the vertices in $V\setminus I^*$. Let us define $X_{\neg I^*}$ as the total number of queries induced by the non-MIS vertices. Similarly, we can define $X^{r}_{\neg I^*}$ as the queries induced by the non-MIS vertices at round $r$. Thus, we have that 
\begin{align*}
    \expect{X_{\neg I^*}} &= \sum_{v \in V-I^*} \sum_{i=1}^{\tilde{r}} \prob{v \text{ survives till round } i} \cdot q_i\\ 
    &\leq n \sum_{i=1}^{\tilde{r}} q_i \prod_{j=1}^{i} \prob{v \text{ survives round } j \mid v \text{ survives till round } j-1 } \\
    &\leq n \sum_{i=1}^{\tilde{r}} q_i \prod_{j=1}^{i} \frac{\delta}{100} \cdot \frac{1}{4^j} \tag{\Cref{clm:mabs-elimination-bounds}} \\
    &\leq n \sum_{i=1}^{\tilde{r}} \paren{\frac{\delta}{100}}^i \cdot \paren{\frac{1}{4}}^{\binom{i}{2}} \cdot q_i \\   
    &= n \sum_{i=1}^{\tilde{r}} \paren{\frac{\delta}{100}}^i \cdot \paren{\frac{1}{4}}^{\binom{i}{2}} \cdot \frac{4}{\eps^2} \cdot 
    \paren{i+ \log 1/\delta}  \\    
    &\leq \frac{4 \delta n}{100 \eps^2} \sum_{i=1}^{\tilde{r}} \paren{\frac{1}{4}}^{i}\cdot \paren{i+ \log 1/\delta} \tag{Since $\delta\leq 1$} \\    
    &\leq \frac{4 \delta n}{100 \eps^2} \paren{1 + \log 1/\delta}. \tag{using properties of geometric sums} \\    
\end{align*}
Therefore, by Markov inequality, we can show that
\begin{align*}
    \prob{X_{\neg I^*} > \frac{2n}{5\eps^2} \log 1/\delta} \leq \delta/5.
\end{align*}

We now analyze the queries induced by the vertices in $I^*$. Similar to the case of the non-MIS analysis, let us define $X_{I^*}$ as the total number of queries induced by the MIS vertices. We will trivially upper bound $X_{I^*}$ in the following way:
\begin{align*}
    X_{I^*} &\leq \alpha n \sum_{i=1}^{\tilde{r}} q_i \\
    &= \alpha n \sum_{i=1}^{\tilde{r}} \frac{4}{\eps^2} \cdot \paren{i+\log{\frac{1}{\delta}}} \\
    &\leq \frac{4 \alpha n}{\eps^2} \cdot \paren{\tilde{r}^2+ \tilde{r} \cdot \log{\frac{1}{\delta}}} \\
    &\leq \frac{4 \alpha n}{\eps^2} \cdot \paren{1 + \paren{\log 1/\alpha}^2+ \lg 1/\alpha \cdot (2+\log 1/\delta) + \log 1/\delta} \\
    &\leq \frac{4 n}{\eps^2} \cdot \paren{5+2\log 1/\delta} \tag{using $\alpha \cdot \lg{\frac{1}{\alpha}}\leq 1$ and $\alpha \cdot \lg^2{\frac{1}{\alpha}}\leq 2$ for any $\alpha \in (0,1)$}
\end{align*}

We can then add the number of queries used by $X_{\neg I^*}$ and $X_{I^*}$ to get the desired sample complexity bound of $\frac{30 n}{\eps^2} \cdot \log 1/\delta$.

Finally, we can apply a union bound over the failure probabilities of the events in the proofs of $i)$, $ii)$, and $ii)$ to argue that with probability at least $1-\delta$, all the statements hold. \myqed{\Cref{lem:vertex-elim}}
\end{proof}

We now proceed to show the guarantee of the matching and MIS phase. Our main lemma for this part is as follows.
\begin{lemma}
\label{lem:matching-mis}
Let $V_r\subseteq V$ be any subset of vertices in \Cref{alg:mis-bandit}. Furthermore, assume that the number of MIS vertices in $V_r$ is at least $50$ times the number of non-MIS vertices in $V_r$, i.e.,
\begin{align*}
\card{V_r\cap I^*} \geq 50 \cdot \card{V_r \setminus I^*}.
\end{align*}
Then, the set $I_r$ returned by \Cref{alg:mis-bandit} is a valid independent set, and we have
\[\card{I_r}\geq \frac{49}{50}\cdot \card{V_r\cap I^*}.\]
\end{lemma}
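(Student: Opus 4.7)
The plan is to prove both parts of the lemma via the standard observation that the non-MIS vertices within $V_r$ already form a vertex cover of $G[V_r]$, together with the $2$-approximation guarantee of the maximal-matching-based vertex cover $U_r$.

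First, to see that $I_r = V_r \setminus U_r$ is a valid independent set, I would note that by construction $U_r$ is a vertex cover of $G[V_r]$, so every edge of $G[V_r]$ has at least one endpoint in $U_r$. Hence $G[V_r \setminus U_r]$ has no edges, which is exactly the definition of $I_r$ being independent in $G$.

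Second, for the size bound, the key observation is that $V_r \setminus I^*$ itself is a vertex cover of $G[V_r]$. Indeed, for any edge $(u,v)$ in $G[V_r]$, at least one of $u,v$ must lie outside $I^*$, since $I^*$ is an independent set of $G$ (so it cannot contain both endpoints of an edge). Consequently, the minimum vertex cover of $G[V_r]$ has size at most $\card{V_r \setminus I^*}$, and since $U_r$ is a $2$-approximate vertex cover, we get
\[
\card{U_r} \;\leq\; 2 \cdot \card{V_r \setminus I^*}.
\]

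Finally I combine the two facts. Writing $\card{I_r} = \card{V_r} - \card{U_r} \geq \card{V_r} - 2\card{V_r \setminus I^*} = \card{V_r \cap I^*} - \card{V_r \setminus I^*}$, and then invoking the hypothesis $\card{V_r \cap I^*} \geq 50 \cdot \card{V_r \setminus I^*}$ (so $\card{V_r \setminus I^*} \leq \tfrac{1}{50}\card{V_r \cap I^*}$), I conclude $\card{I_r} \geq \tfrac{49}{50}\cdot \card{V_r \cap I^*}$. There is no real obstacle here; the only subtlety is recognizing that $V_r \setminus I^*$ is a vertex cover of $G[V_r]$, which is immediate from $I^*$ being independent.
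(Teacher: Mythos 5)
Your proof is correct and follows essentially the same route as the paper: observe that $V_r \setminus I^*$ is a vertex cover of $G[V_r]$ because $I^*$ is independent, use the $2$-approximation guarantee to bound $\card{U_r} \leq 2\card{V_r \setminus I^*}$, and combine with the hypothesis to get $\card{I_r} \geq \frac{49}{50}\card{V_r \cap I^*}$. No gaps.
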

\begin{proof}
Recall that we compute a 2-approximate vertex cover $U_r$ in the vertex cover phase. We know that the complement $I_r \gets V_r \setminus U_r$ is an independent set. 
This is because all edges of the graph are incident on the vertex cover so the remaining vertices form an independent set.

We know that $V_r \setminus I^*$ is a vertex cover since $V_r\cap I^*$ is an independent set.
Thus, we have 
\begin{align*}
    \card{I_r} &= \card{V_r} - \card{U_r} \tag{by definition} \\
    &\geq \card{V_r \cap I^*} + \card{V_r \setminus I^*} - 2\card{V_r \setminus I^*} \tag{since $U_r$ is a $2$-approximation} \\
    &\geq \card{V_r \cap I^*} - \frac{1}{50} \cdot \card{V_r \cap I^*} \tag{using the assumption} \\
    &= \frac{49}{50} \cdot \card{V_r \cap I^*} 
\end{align*}
\myqed{\Cref{lem:matching-mis}}
\end{proof}

The final missing piece is the \emph{efficiency} of the algorithm. We now prove that the algorithm is efficient both in time and the number of $\orc$ oracle queries.
\begin{lemma}
\label{lem:bandit-alg-efficiency}
\Cref{alg:mis-bandit} runs in $O(m \log{n})$ time and uses at most $\frac{30 n}{\eps^2} \cdot \log{\frac{1}{\delta}}$ queries on $\orc$.
\end{lemma}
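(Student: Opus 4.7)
The plan is to verify the two bounds separately.

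The query bound is immediate from the explicit termination check built into \Cref{alg:mis-bandit}: the algorithm halts the moment the cumulative number of $\orc$ calls surpasses $\frac{30 n}{\eps^2} \cdot \log(1/\delta)$, so this is a deterministic a-priori guarantee regardless of what the random oracle answers look like.

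For the running time, I will first tally the per-round cost. In round $r$: (i) issuing the oracle queries costs $O(|V_{r-1}| \cdot q_r)$ time since each call is $O(1)$; (ii) deciding which vertices survive by comparing each vertex's count of ``yes'' answers with $q_r/2$ costs $O(|V_{r-1}|)$; (iii) computing a maximal matching on $G[V_{r-1}]$ and taking its endpoints as the $2$-approximate vertex cover $U_r$ costs $O(|V_{r-1}| + |E(G[V_{r-1}])|) = O(m)$; and (iv) forming $I_r = V_r \setminus U_r$ and updating the best independent set $I$ maintained so far costs $O(|V_{r-1}|)$. Altogether, the per-round cost is $O(|V_{r-1}| \cdot q_r + m)$.

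Summing contributions (i), (ii), (iv) across all executed rounds is bounded by the query-budget argument, giving a total of $O(n \log(1/\delta)/\eps^2)$. The vertex-cover contributions sum to $O(R \cdot m)$, where $R$ is the number of executed rounds. Combining, the total running time is $O\!\left(\tfrac{n \log(1/\delta)}{\eps^2} + R \cdot m\right)$.

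The main obstacle is therefore bounding $R = O(\log n)$. My plan here is to combine the termination condition with the growth of $q_r$: since $|V_{r-1}| \geq 1$ in every executed round and $q_r \geq \tfrac{4(r + \log(1/\delta))}{\eps^2}$ grows linearly in $r$, the partial sums $\sum_{r=1}^{R} q_r$ eventually exceed the query budget, which upper bounds $R$. Invoking the structural guarantees of \Cref{lem:vertex-elim} is the cleanest way to sharpen this: with probability $1-\delta$ the algorithm has already produced a high-quality independent set by round $\tilde r = 1 + \log(1/\alpha) = O(\log n)$, and in the complementary bad event the same query-budget comparison still limits $R$ to $O(\log n)$ in expectation. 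Plugging $R = O(\log n)$ back into the per-round bound then gives total running time $O(m \log n)$, as claimed.
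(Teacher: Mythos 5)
Your accounting of the query bound and of the per-round work is fine, and your decomposition is in fact a little more careful than the paper's: you charge all query-related work (issuing queries, tallying majorities, forming $I_r$) against the query budget and isolate the $O(m)$ vertex-cover computation as the only per-round cost that must be multiplied by the number of rounds. The paper's own proof does essentially the same thing more tersely ($O(n)$ per round for the majorities, $O(m)$ for the matching) and then asserts, citing \Cref{lem:vertex-elim}, that the process runs for $O(\log(1/\alpha))=O(\log n)$ rounds since $\alpha\geq 1/n$.

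The genuine gap in your write-up is precisely that round bound $R=O(\log n)$, and neither of your two arguments establishes it. First, the comparison of $\sum_{r\le R} q_r$ with the budget using only $\card{V_{r-1}}\geq 1$ gives $\tfrac{4}{\eps^2}\sum_{r\le R}\paren{r+\log(1/\delta)}\leq \tfrac{30n}{\eps^2}\log(1/\delta)$, which yields $R=O\paren{\sqrt{n\log(1/\delta)}}$, not $O(\log n)$; this is not an artifact of a loose estimate, since when $\card{I^*}=O(1)$ almost all vertices are eliminated after a few rounds, each later round consumes only about $q_r$ queries, and the budget genuinely is not exceeded for roughly $\sqrt{n\log(1/\delta)}$ rounds (and if $V_r$ ever became empty the budget would never be exceeded at all). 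Second, \Cref{lem:vertex-elim} does not say the algorithm stops at round $\tilde{r}$: part (iii) shows the budget is \emph{not yet} exhausted by round $\tilde{r}$, i.e.\ that round $\tilde{r}$ is reached, and \Cref{alg:mis-bandit} has no stopping rule tied to the quality of $I_r$ --- it stops only when the budget is exceeded --- so ``a high-quality independent set has already been produced by round $\tilde{r}$'' places no upper bound on the number of rounds actually executed, and the claim that in the bad event the budget comparison still limits $R$ to $O(\log n)$ ``in expectation'' is unsupported (as noted, it does not). To close the gap you would need either an argument that the rounds after $\tilde{r}$ contribute little extra time (not just few queries), or a stopping rule/reading of the algorithm under which it halts by round $O(\log n)$; as written, your derivation of the $O(m\log n)$ running time does not go through at this step, which is the same step the paper settles by assertion.
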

\begin{proof}
The query complexity is by the design of the algorithm as we terminate upon using more than $30 \cdot \frac{n}{\eps^2} \cdot \log{\frac{1}{\delta}}$ queries. 

For the running time, note that in each iteration of $r$, we only need to: $i).$ take the majority for all queried vertices, which can be maintained in $O(n)$ time; and $ii).$ compute a greedy matching and remove the vertices, which takes $O(m)$ time. By \Cref{lem:vertex-elim}, the process terminates in $O(\log{\frac{1}{\alpha}}) = O(\log{n})$ time ($\alpha \geq \frac{1}{n}$ since there has to be at least one vertex in $I^*$). Therefore, the entire algorithm takes $O(m\log{n})$ time in total.
\end{proof}

\begin{proof}[\textbf{Finalizing the proof of \Cref{thm:MABs-setting}}] 
The query efficiency is by the design of the algorithm, and the running time simply follows from \Cref{lem:bandit-alg-efficiency}. For the approximation guarantee, note that by \Cref{lem:vertex-elim}, we will proceed to round $\tilde{r}=10 \log{\frac{1}{\alpha}}$, at which point we will have $\card{V_{\tilde{r}}\cap I^*}\geq \frac{49}{50}\cdot \alpha n$ and $\card{V_r\cap I^*} \geq 50 \cdot \card{V_r \setminus I^*}$. Therefore, by \Cref{lem:matching-mis}, the returned $I_{\tilde{r}}$ is of size at least
\begin{align*}
\card{I_{\tilde{r}}}\geq \frac{49}{50}\cdot \card{V_{\tilde{r}}\cap I^*} \geq \frac{49}{50}\cdot \frac{49}{50} \cdot \alpha n,
\end{align*}
which gives us the desired $48/50$ approximation.
\end{proof}

\begin{remark}
We aim to get the $O(1)$ approximation in our algorithm and analysis. However, we remark that we can get both non-asymptotic and asymptotic trade-offs between the number of queries and the approximation factor. For the non-asymptotic trade-off (i.e., using more queries to get a better constant approximation), we can increase the leading constant on the sample complexity, and obtain the approximation with a larger constant. For the asymptotic trade-off, we can perform the simple trick by sampling $k$ vertices uniformly at random and running \Cref{alg:mis-bandit} on the sampled vertices. This will give us an $O(\frac{k}{n})$-approximation algorithm with $O(\frac{k}{\eps^2}\cdot \log{\frac{1}{\delta}})$ queries as long as $\alpha k = \Omega(\log{n})$. 
\end{remark}


\section{Discussion and Open Problems}
\label{sec:discussion}
We discussed learning-augmented algorithms for the Maximum Independent Set problem in this paper. Our main results include algorithms for both persistent and non-persistent noise settings, demonstrating that a learning-augmented oracle could lead to MIS algorithms with considerably better efficiency. There are several intriguing open problems following our work.
\begin{itemize}
\item For the \textbf{persistent noise} setting, the main open question is whether we could beat the $\widetilde{\Theta}(\sqrt{\Delta}/\eps)$ approximation bound with the same oracle. We do not have any lower bounds for the persistent noise setting in this paper, and it is unclear what type of techniques could be used to prove lower bounds for learning-augmented algorithms.
\item For the \textbf{non-persistent noise} setting, our algorithm matches the \emph{asymptotically} optimal approximation factor using $O(n)$ queries. In \Cref{app:lb-mabs}, we also proved that we cannot obtain the same results by only querying the oracle (and \emph{not} looking into the graph). An open problem here is that if we want to recover a $1-o(1)$ fraction of the MIS vertices, how many queries do we need? We suspect there is a lower bound on the number of queries (e.g., $\omega(n)$), but it is not immediately clear how to prove it.
\item We can also ask about \textbf{sublinear} number of queries on the oracle $\orc$, i.e., if we make $o(n)$ queries on the oracle, what is the best we can do for both persistent and non-persistent noise settings? Currently, our algorithms in both settings require $\Omega(n)$ queries to the oracle.
\item Finally, for the \textbf{practical} aspect of the algorithms, we believe the oracles are possible to implement in practice. For instance, if we have features on the nodes, it is possible to use forward-pass graph convolution networks (GCNs), and simply run greedy in each ``cluster'' of nodes whose final features are sufficiently similar. Exploring practical oracles for this purpose would also be an interesting problem to resolve.
\end{itemize}

\subsection*{Acknowledgements}
The authors are grateful to Sepehr Assadi and Samson Zhou for the helpful conversations 
regarding the project. The authors also thank anonymous APPROX reviewers for helpful suggestions.

\bibliographystyle{alpha}
\bibliography{reference}

\clearpage

\appendix

\section{Technical Preliminaries}
\label{app:tech-prelim}

We use the following standard forms of Chernoff bound. 

\begin{proposition}[Chernoff-Hoeffding bound]\label{prop:chernoff}
Let $X_1,\ldots,X_m$ be $m$ independent random variables with support in $[0,1]$. Define $X := \sum_{i=1}^{m} X_i$. Then, for every $t > 0$, 
\begin{align*}
    & \Pr\paren{X - \expect{X} \geq t} \leq  \exp\paren{-\frac{2t^2}{m}}\\
    & \Pr\paren{X - \expect{X} \leq -t} \leq  \exp\paren{-\frac{2t^2}{m}}. 
\end{align*}
\end{proposition}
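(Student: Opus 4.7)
\begin{proofsk}
The plan is to prove this by the classical Chernoff--Cram\'er exponential moment method, combined with Hoeffding's lemma for bounded variables.

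First, for any $\lambda > 0$, I would apply Markov's inequality to the nonnegative random variable $e^{\lambda(X - \expect{X})}$, obtaining
\[
\Pr\paren{X - \expect{X} \geq t} \leq e^{-\lambda t} \cdot \expect{e^{\lambda(X - \expect{X})}}.
\]
By independence of the $X_i$, the moment generating function factorizes:
\[
\expect{e^{\lambda(X - \expect{X})}} = \prod_{i=1}^{m} \expect{e^{\lambda(X_i - \expect{X_i})}}.
\]

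Next, the key step is Hoeffding's lemma, which I would prove as follows: for any random variable $Y$ with $Y \in [a,b]$ and $\expect{Y}=0$, one has $\expect{e^{\lambda Y}} \leq \exp\paren{\lambda^2 (b-a)^2 / 8}$. This is obtained by writing $Y$ as a convex combination of the endpoints $a,b$, using convexity of $y \mapsto e^{\lambda y}$ to bound $e^{\lambda Y}$ pointwise, taking expectations, and then analyzing the resulting one-variable function $\varphi(\lambda) := \log\paren{\tfrac{b}{b-a} e^{\lambda a} - \tfrac{a}{b-a} e^{\lambda b}}$ via a second-order Taylor expansion; one shows $\varphi(0)=\varphi'(0)=0$ and $\varphi''(u) \leq (b-a)^2/4$ for all $u$, so $\varphi(\lambda) \leq \lambda^2 (b-a)^2 / 8$. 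Applied to $Y_i = X_i - \expect{X_i}$ with range contained in an interval of length at most $1$, this gives $\expect{e^{\lambda(X_i - \expect{X_i})}} \leq e^{\lambda^2/8}$.

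Combining these two ingredients yields
\[
\Pr\paren{X - \expect{X} \geq t} \leq \exp\paren{-\lambda t + m \lambda^2 / 8}.
\]
The final step is to optimize over $\lambda > 0$: the right-hand side is minimized at $\lambda = 4t/m$, which plugs in to give the desired bound $\exp\paren{-2t^2/m}$. For the lower tail, I would simply apply the same argument to the independent variables $-X_i$ (whose centered versions lie in an interval of length at most $1$), yielding the identical bound. The main obstacle is really just the careful analysis of $\varphi''$ in Hoeffding's lemma; once that constant $1/8$ is in hand, the rest is bookkeeping.
\end{proofsk}
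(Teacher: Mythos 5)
Your proof is correct: it is the classical Hoeffding argument (Markov's inequality applied to $e^{\lambda(X-\expect{X})}$, factorization of the moment generating function by independence, Hoeffding's lemma with the constant $1/8$ for variables in an interval of length $1$, and optimization at $\lambda=4t/m$), and the algebra checks out for both tails. The paper itself states this proposition as a standard black-box fact in the preliminaries and gives no proof, so there is nothing to compare against; your argument is the standard one and fills that role correctly.
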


\begin{proposition}[Chernoff bound; 
c.f.~\cite{dubhashi2009concentration}]\label{prop:chernoff-mul}
	Suppose $X_1,\ldots,X_m$ are $m$ independent random variables with range $[0,1]$ each. Let 
	$X := \sum_{i=1}^m X_i$ and $\mu_L \leq \expect{X} \leq \mu_H$. Then, for any $\delta \in [0,1]$, 
	\[
	\Pr\paren{X >  (1+\delta) \cdot \mu_H} \leq \exp\paren{-\frac{\delta^2 \cdot 
	\mu_H}{3+\delta}} \quad 
	\textnormal{and} \quad \Pr\paren{X <  (1-\delta) \cdot \mu_L} \leq \exp\paren{-\frac{\delta^2 
	\cdot 
			\mu_L}{2+\delta}}.
	\]
\end{proposition}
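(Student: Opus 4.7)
The plan is to establish both tails via the classical moment-generating-function (Chernoff) method, optimizing the free parameter to obtain the stated exponents. First, for the upper tail, I would apply Markov's inequality to $e^{tX}$ for a parameter $t > 0$ to be chosen later:
\[
\Pr\paren{X > (1+\delta)\mu_H} = \Pr\paren{e^{tX} > e^{t(1+\delta)\mu_H}} \leq \frac{\Exp[e^{tX}]}{e^{t(1+\delta)\mu_H}}.
\]
By independence, $\Exp[e^{tX}] = \prod_{i=1}^{m} \Exp[e^{tX_i}]$. Since each $X_i\in[0,1]$ and $e^{tx}$ is convex, the chord bound $e^{tx} \leq 1 + x(e^t - 1)$ yields $\Exp[e^{tX_i}] \leq 1 + \Exp[X_i](e^t-1) \leq \exp\!\paren{\Exp[X_i](e^t-1)}$, and multiplying gives $\Exp[e^{tX}] \leq \exp\!\paren{\mu_H(e^t-1)}$ (using $e^t-1>0$ and $\Exp[X]\leq\mu_H$).

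Next I would optimize in $t$ by setting $t = \ln(1+\delta)$, yielding
\[
\Pr\paren{X > (1+\delta)\mu_H} \leq \exp\!\Paren{\mu_H\paren{\delta - (1+\delta)\ln(1+\delta)}}.
\]
The final step is to replace the transcendental exponent with the cleaner quadratic form via the elementary inequality $(1+\delta)\ln(1+\delta) - \delta \geq \tfrac{\delta^2}{3+\delta}$ valid for $\delta\in[0,1]$. This is where the main (albeit routine) work lies: I would verify the inequality by defining $f(\delta) = (1+\delta)\ln(1+\delta) - \delta - \tfrac{\delta^2}{3+\delta}$, checking $f(0)=0$, and showing $f'(\delta)\geq 0$ on $[0,1]$ by expanding $f'$ and bounding $\ln(1+\delta)$ from below by a rational function.

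For the lower tail, I would repeat the argument symmetrically with parameter $-t < 0$: apply Markov to $e^{-tX}$, obtain $\Exp[e^{-tX}] \leq \exp\!\paren{\mu_L(e^{-t}-1)}$ (here we use $e^{-t}-1 < 0$ together with $\Exp[X]\geq\mu_L$ to get the correct direction), and choose $t = -\ln(1-\delta)$. The analogous transcendental step is controlled by the sharper bound $(1-\delta)\ln(1-\delta) + \delta \geq \tfrac{\delta^2}{2+\delta}$, which accounts for the slightly better constant $2+\delta$ in the lower tail. The only genuinely nontrivial piece in the whole proof is these two analytic inequalities; everything else is bookkeeping with independence and Markov's inequality.
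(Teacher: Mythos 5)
Your proposal is correct: the exponential-moment (MGF) argument with $t=\ln(1+\delta)$, the monotone substitution of $\mu_H$ (resp.\ $\mu_L$) in the positive (resp.\ negative) exponent, and the elementary inequalities $(1+\delta)\ln(1+\delta)-\delta\geq \delta^2/(3+\delta)$ and $(1-\delta)\ln(1-\delta)+\delta\geq \delta^2/(2+\delta)$ on $[0,1]$ all go through, and indeed these are the only nontrivial steps. The paper itself does not prove this proposition but cites it from the standard reference (Dubhashi--Panconesi), whose proof is exactly this textbook Chernoff argument, so your route matches the intended one.
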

 
 We also consider limited independence hash functions.
 Roughly speaking, a $k$-wise independent hash function behaves like a totally random function when considering at most $k$ elements.
 Formally, a family of hash functions $H= \set{h:[n] \rightarrow [m]}$ is $k$-wise independent if for any $x_1,x_2,\ldots,x_k \in [n]$ and $y_1,y_2,\ldots,y_k \in [m]$ the following holds:
 \[
    \Pr_{h \in_R H}\paren{h(x_1)=y_1 \wedge h(x_2)=y_2 \wedge \ldots \wedge h(x_k)=y_k} = m^{-k}.
 \]
 We shall use the following concentration result on an extension of Chernoff-Hoeffding bounds for limited independence hash function. 
 \begin{proposition}[\cite{SchmidtSS95}]\label{prop:chernoff-limited}
 	Suppose $h$ is a $k$-wise independent hash function and $X_1,\ldots,X_m$ are $m$ random variables in $\set{0,1}$ where $X_i = 1$ iff $h(i) = 1$. 
	Let $X := \sum_{i=1}^{m} X_i$. Then, for any $\delta > 0$, 
	\[
		\Pr\paren{\card{X - \expect{X}} \geq \delta \cdot \expect{X}} \leq \exp\paren{-\min\set{\frac{k}{2},\frac{\delta^2}{4+2\delta} \cdot \expect{X}}}. 
	\] 
 \end{proposition}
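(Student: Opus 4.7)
The plan is to prove the bound by the method of moments, exploiting the fact that the $k$-th central moment of a sum only depends on joint distributions of at most $k$ of the summands. First I would center the variables: let $p_i = \Pr(h(i)=1)$, $Y_i = X_i - p_i$, and $Y = X - \expect{X} = \sum_{i=1}^m Y_i$. Because $h$ is $k$-wise independent, $(Y_1,\dots,Y_m)$ are $k$-wise independent mean-zero variables supported on $\{-p_i, 1-p_i\}$, and in particular $|Y_i|\le 1$ and $\expect{Y_i^2} = p_i(1-p_i)$, so $\sigma^2 := \sum_i \expect{Y_i^2} \le \expect{X}$.

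The heart of the argument is the following observation: expanding $Y^k = \bigl(\sum_i Y_i\bigr)^k$ into monomials, every monomial $\prod_{i\in S} Y_i^{e_i}$ involves at most $k$ distinct indices (since $\sum_i e_i = k$). Hence its expectation under $k$-wise independence factors exactly as under full independence, and $\expect{Y^k}$ computed with $h$ equals $\expect{Y^k}$ computed with fully independent copies. Taking $k$ even, this reduces the task to a purely classical moment bound for sums of independent bounded mean-zero variables, for which one has $\expect{Y^k} \le C_k \cdot \sigma^k$ with $C_k$ of the form $(ck)^{k/2}$ for an absolute constant $c$ (Khintchine/Rosenthal-type inequality; one can prove it by induction on $k$, peeling off one $Y_i$ at a time and using $\expect{Y_i^{2j}} \le \expect{Y_i^2}$). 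Then Markov's inequality gives
\[
\Pr\paren{|Y| \ge \delta \expect{X}} \le \frac{\expect{Y^k}}{(\delta \expect{X})^k} \le \paren{\frac{ck}{\delta^2 \expect{X}}}^{k/2},
\]
and an appropriate choice of $k$ yields the factor $\exp(-k/2)$ in the statement.

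To get the second branch $\exp\bigl(-\tfrac{\delta^2 \expect{X}}{4+2\delta}\bigr)$, I would note that this is exactly the Chernoff bound regime: when $k$ is large enough that the moment bound above has not yet saturated, the same moment-method calculation (or, equivalently, the MGF argument truncated at degree $k$) reproduces the standard Bernstein/Chernoff exponent. Concretely, if we pick $k$ to equalize the two bounds, we get the sharper of the two exponents, which is exactly the $\min$ appearing in the statement. Taking the minimum of the two regimes and then relaxing the even-$k$ restriction (by passing from $k$ to $k-1$ at the cost of an absolute constant that can be absorbed into the $1/2$ and $1/(4+2\delta)$ factors with a bit of care) produces the claimed inequality.

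The main obstacle I anticipate is bookkeeping the constants so that the exponent comes out to $k/2$ and $\delta^2\expect{X}/(4+2\delta)$ rather than something weaker. The moment bound $\expect{Y^k} \le (ck)^{k/2}\sigma^k$ is easy to prove with $c$ an unspecified absolute constant, but pinning down $c$ to match the $\nicefrac{1}{2}$ in the exponent requires the sharper variance-sensitive version (essentially Hoeffding's lemma applied to each factor when peeling, together with a careful count of the non-vanishing partitions of $[k]$ into pairs and singletons, where singletons contribute zero because the $Y_i$ are mean-zero). I would handle this by following the Schmidt--Siegel--Srinivasan bookkeeping: separate the expansion of $Y^k$ by the partition type of the multi-index, observe that only partitions with all blocks of size $\ge 2$ survive, and bound the number of such partitions by a Stirling-type estimate.
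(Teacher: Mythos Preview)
The paper does not prove this proposition at all: it is stated in the appendix of technical preliminaries and attributed to Schmidt--Siegel--Srinivasan (1995) as a black-box citation, with no proof or proof sketch given. So there is no ``paper's own proof'' to compare your proposal against.

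That said, your outline is the right one and is essentially the Schmidt--Siegel--Srinivasan argument: exploit that the $k$-th central moment of $\sum_i Y_i$ only involves joint distributions of $\le k$ indices, so $k$-wise independence makes it identical to the fully-independent moment; bound that moment combinatorially (only partitions with all blocks of size $\ge 2$ survive); apply Markov at order $k$; and take the better of the resulting $(ck/(\delta^2\expect{X}))^{k/2}$ bound and the Chernoff-regime bound. Your anticipated obstacle is accurate---the sharp constants $k/2$ and $\delta^2\expect{X}/(4+2\delta)$ require the careful partition-counting and variance-sensitive bookkeeping in the original paper rather than a generic Rosenthal constant---but the structure of the argument is exactly as you describe.
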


\section{A Simple Algorithm for an \texorpdfstring{$O(\log{n})$}{O(log n)}-approximation in the Non-persistent Noise Setting}
\label{app:mabs-simple-alg}

In this section, we give a simple algorithm to get an $O(\log{n})$-approximation using the non-persistent noise oracle.
The idea is to sample every vertex independently with probability $1/\log n$ and the query the sampled vertices $O(\log n)$ times.

\begin{lemma}
    There is an algorithm that gives an $O(\log{n})$-approximation using the non-persistent noise oracle with $O(n/\eps^2)$ queries with constant probability.
\end{lemma}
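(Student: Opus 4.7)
The plan is to follow the sketch in the paragraph preceding the lemma: form a random subset $V' \subseteq V$ small enough that we can afford $\Theta(\log n/\eps^2)$ queries per vertex, use majority voting to classify every vertex of $V'$ correctly with respect to $I^*$, and output the survivors (which automatically form an independent set, since any subset of $I^*$ is independent). Concretely, include each vertex of $V$ in $V'$ independently with probability $p = 1/\log n$, query $\orc(v)$ exactly $q := C\log n/\eps^2$ times for each $v \in V'$ (where $C$ is a sufficiently large absolute constant), place $v$ in the candidate set $S$ iff the number of ``yes'' responses exceeds $q/2$, and return $S$ if it is nonempty, or any fixed singleton $\{v_0\}$ otherwise.

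For the correctness of classification, apply \Cref{prop:chernoff} with $t = \eps q$ to each queried vertex; since the expected number of ``yes'' answers is $(1/2 \pm \eps)q$ depending on whether $v \in I^*$, the majority vote agrees with the true status of $v$ with failure probability at most $\exp(-2\eps^2 q) \leq n^{-3}$ once $C$ is large enough. Union bounding over the (at most $n$) vertices in $V'$ gives $S = V' \cap I^*$ with probability $1 - n^{-2}$, so $S \subseteq I^*$ is independent. For the query budget, $\Exp[|V'|] = n/\log n$, and a Chernoff bound on $|V'|$ gives $|V'| \leq 2n/\log n$ with high probability, keeping the total number of queries to $|V'|\cdot q = O(n/\eps^2)$.

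For the approximation ratio I would split on $|I^*|$. If $|I^*| \geq 16 \log n$, then $\mu := \Exp[|V' \cap I^*|] = |I^*|/\log n \geq 16$, and a lower-tail Chernoff bound yields $\Pr[|V' \cap I^*| \leq \mu/2] \leq e^{-\mu/8} \leq e^{-2}$. Combined with the classification event this gives $|S| \geq |I^*|/(2\log n)$ with constant probability, hence an $O(\log n)$-approximation. If instead $|I^*| < 16 \log n$, the singleton fallback $\{v_0\}$ is itself an $O(\log n)$-approximation. The only delicate point (and the reason for the fallback) is precisely this small-$|I^*|$ regime, where $V' \cap I^*$ is empty with constant probability and $S$ contributes nothing; the explicit singleton output handles it cleanly. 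All failure probabilities involved are adjustable constants, so the overall procedure succeeds with constant probability as required.
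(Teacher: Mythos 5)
Your proposal is correct and follows essentially the same route as the paper's own proof: sample each vertex with probability $1/\log n$, classify the sampled vertices by majority vote over $\Theta(\log n/\eps^2)$ queries via Chernoff plus a union bound, bound the total query count by concentration of the sample size, and handle the small-$\card{I^*}$ regime by noting a single vertex already gives an $O(\log n)$-approximation. Your explicit singleton fallback just makes formal the case split the paper states informally, so there is no substantive difference.
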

\begin{proof}
    We sample every vertex independently with probability $1/\log n$ and add it to the set $U$.
    For every vertex $u \in U$, query it $q:=\ln n/\eps^2$ times and compute the majority of the queries.
    
    The claim now is that this is enough to know whether $u\in I^*$ for every $u\in U$ with high probability.
    We prove this claim by applying the Chernoff bound in \Cref{prop:chernoff}. 
    For any vertex $v\in U \cap I^*$, let the random variable $X_v^i=1$ if the $i^{th}$ query for vertex $v$ is a ``yes'' and $X_v^i=0$ otherwise for $i\in [q]$. The proof is the same for $v\in u \setminus I^*$.
    Observe that $X_v = \sum_i X_v^i$ is the number of ``yes'' answers returned by $\orc(v)$ out of the $q$ queries. 
    Clearly, we have that $\expect{X_v}= (1/2+\eps) \cdot q$, and $X_v$ is a summation of the independent indicator random variables so, we can apply \Cref{prop:chernoff} to show that
    \begin{align*}
    \Pr\paren{X_v < \frac{q}{2}}& = \Pr\paren{X_v - \expect{X_v} \leq -\eps \cdot q}\\
    &\leq \exp\paren{-2 \cdot\eps^2 \cdot q} \tag{applying \Cref{prop:chernoff}}\\
    &= \exp\paren{-2 \cdot \ln n}\\
    &=n^{-2}.
    \end{align*}
    A union bound over all vertices implies that we know whether $u\in I^*$ for every $u\in U$ with high probability.
    Also, using the Chernoff bound, we can say that with high probability, the number of vertices that are sampled in $U$ is at most $2n/\log n$.
    Thus, the total number of queries is at most $2n/\eps^2$.


    We sample every vertex with probability $1/\log n$ so we sample $\card{I^*}/\log n$ vertices from $I^*$ in expectation.
    Firstly, we note that if $\card{I^*} \leq 10 \log n$ the outputting a single vertex gives a $O(\log n)$-approximation.
    So we consider the case where $\card{I^*} \geq 10 \log n$. 
    For any vertex $v\in V$, let the random variable $Y_v=1$ if the vertex $v$ is sampled in $U$ and $Y_v=0$ otherwise.
    Let $Y_{I^*} = \sum_{v\in I^*} Y_v$ be the random variable denoting the number of elements of $I^*$ sampled in $U$. 
    Clearly, we have that $\expect{Y_v}= 1/\log n$, and $Y_{I^*}$ is a summation of the independent indicator random variables so, we can apply \Cref{prop:chernoff-mul} with $\delta=0.5$ to show that:
    \begin{align*}
        \prob{Y_{I^*} < \expect{Y_{I^*}}/2} &\leq \exp\paren{-\expect{Y_{I^*}}/ 10} \\
        &= \exp\paren{-\card{I^*}/10 \log n} \\
        &\leq \exp\paren{-10 \log n/10 \log n} \tag{Using the lower bound on $\card{I^*}$} \\
        &=1/e.
    \end{align*}
    Thus, with constant probability we sample at least $\card{I^*}/2\log n$ vertices from $I^*$.
    So we are able to recover a $O(1/\log n)$ fraction of $I^*$ giving us an $O(\log{n})$-approximation.
    \end{proof}

    Observe that if $\card{I^*} = \omega(\log n)$ then the above lemma holds with probability $1-o(1)$.
    \Cref{alg:mis-bandit} improves the approximation factor from $O(\log n)$ to $O(1)$.

\section{A Lower Bound for the Non-persistent Noise Setting}
\label{app:lb-mabs}
In this section, we give lower bounds showing that it is not possible to solve the non-persistent noise setting with $O(n/\eps^2)$ queries to $\orc$ using \emph{only} the pure exploration algorithms in MABs. This lower bound shows that to get the desired $O(1)$-approximation with $O(n/\eps^2)$ samples, we must use the fact that the oracle is based on a maximum independent set and utilize structural properties therein.

In what follows, we divide the lower bounds into two results: the first lower bound is for returning the \emph{entire} MIS set $I^*$, and the second lower bound works with the stronger setting where we only need to return an $O(1)$ fraction of the vertices in $I^*$. A caveat here is that for technical reasons, our second lower bound works with arms with \emph{Gaussian} distributions (see for details). A stronger lower bound with Bernoulli distributions is still interesting to pursue.

\subsection{A lower bound for obtaining the entire maximum independent set}
We give a lower bound for getting the entire set $I^*$ of the maximum independent set. 
The starting point of our lower bound is a top-$k$ pure exploration lower bound by \cite{KalyanakrishnanTAS12}. The statement of the lower bound is as follows.
\begin{proposition}[\cite{KalyanakrishnanTAS12}, rephrased]
\label{prop:worst-case-mab-lb}
Let $\dist(n,k)$ be a distribution of Bernoulli MABs instances as follows.
\begin{tbox}
\textnormal{
$\dist(n,k)$: a distribution of Bernoulli multi-armed bandits with $n$ arms such that $k\leq \frac{n}{2}$.
\begin{itemize}
\item In both cases, let the first arm be with mean reward $\frac{1}{2}$.
\item Sampling of the high-reward arms:
\begin{enumerate}
\item With probability $1/2$, uniformly at random sample $k$ coordinates $S\subseteq \{2,\cdots, n\}$, and let the arms with mean rewards $\frac{1}{2}+\eps$.
\item With probability $1/2$, uniformly at random sample $(k-1)$ coordinates $S\subseteq \{2,\cdots, n\}$, and let the arms with mean rewards $\frac{1}{2}+\eps$.
\end{enumerate}
\item For the rest of the arms, e.g., arms in coordinates $\{2,\cdots, n\}\setminus S$, let their mean rewards be $\frac{1}{2}-\eps$.
\end{itemize}
}
\end{tbox}
\noindent
Then, any algorithm that finds the top $k$ arms on $\dist(n,k)$ with probability at least $9/10$ has to use $\Omega(\frac{n}{\eps^2}\cdot \log{k})$ samples.
\end{proposition}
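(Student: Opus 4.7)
The plan is to prove the lower bound via a change-of-measure argument combined with a reduction to a collection of binary hypothesis tests, since this is the canonical technique behind top-$k$ identification lower bounds. First, I would observe that any algorithm that outputs the correct top-$k$ set on $\dist(n,k)$ with probability $\geq 9/10$ must in particular decide correctly whether arm $1$ belongs to the top $k$, because in case (1) arm $1$ is excluded (the top $k$ is exactly $S$) while in case (2) arm $1$ is included (the top $k$ is $\{1\}\cup S$). So it suffices to prove the lower bound for this binary decision problem under a prior that is uniform over the two cases.

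Second, I would set up the change-of-measure as follows. Fix a deterministic algorithm $\mathcal{A}$ with sample budget $T$; by Yao's minimax principle we can fix the algorithm's internal randomness. For each arm $a\in\{2,\dots,n\}$, let $T_a$ denote the (random) number of pulls made on arm $a$ during the run. Using Pinsker's inequality and the standard KL bound $\mathrm{KL}(\mathrm{Bern}(1/2+\eps)\,\|\,\mathrm{Bern}(1/2-\eps))=O(\eps^2)$, I would bound, for any two instances $\mathcal{I}_1,\mathcal{I}_2$ that differ on a single arm $a^\star$, the total variation between $\mathcal{A}$'s observation traces on $\mathcal{I}_1$ and $\mathcal{I}_2$ by $O(\eps\cdot\sqrt{\mathbb{E}[T_{a^\star}]})$.

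Third, and most importantly, I would extract the $\log k$ factor by a hybrid/coupling argument across the choice of $S$. The key idea is that under the uniform prior over $S$, an arm that the algorithm wants to confidently classify must be sampled enough times that a union bound over $\Theta(k)$ potentially ``swappable'' arms still yields high confidence; i.e., the per-arm error must be $O(1/k)$. Formalizing this, I would construct a chain of hypotheses $\mathcal{I}_0,\mathcal{I}_1,\dots,\mathcal{I}_{\Theta(k)}$, where $\mathcal{I}_j$ and $\mathcal{I}_{j+1}$ differ in a single arm's mean, chain the per-step TV bounds, and argue that if $\sum_a \mathbb{E}[T_a]=o(n\log k/\eps^2)$ then by averaging at least a constant fraction of arms have $\mathbb{E}[T_a]=o(\log k/\eps^2)$. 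Combining the chain TV bound with this deficit shows the algorithm cannot distinguish case (1) from case (2) with advantage better than $o(1)$, contradicting the $9/10$ success requirement.

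The main obstacle is step three: a naive change-of-measure only delivers the $\Omega(n/\eps^2)$ bound (without the $\log k$), because standard KL arguments treat each arm in isolation. Getting the logarithmic amplification requires that the hypothesis tests across arms be coupled carefully enough that the error probability for any single arm must be $O(1/k)$ rather than $O(1)$. The cleanest way to handle this is to follow the argument of Kalyanakrishnan \emph{et al.}: identify a ``bad event'' corresponding to any of $\Theta(k)$ potential arm-swaps being confusable, express the algorithm's success probability as an intersection over these events, and use a Bayes-rule style decomposition that forces each individual confidence to scale like $1/k$, which through the per-arm KL bound translates into the $\log k / \eps^2$ samples-per-arm lower bound and hence the stated $\Omega(n\log k/\eps^2)$ total.
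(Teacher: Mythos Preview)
The paper does not prove this proposition at all: it is stated as a cited result from \cite{KalyanakrishnanTAS12} (note the ``rephrased'' tag) and is used purely as a black box in the proof of \Cref{lem:mabs-lb-entire-set}. So there is no proof in the paper to compare your proposal against.

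That said, a brief comment on your sketch. Steps one and two are the right scaffolding, and your identification of step three as the crux is accurate. However, the mechanism you describe there is a mixture of two different strategies that do not combine cleanly. A ``chain of hypotheses $\mathcal{I}_0,\dots,\mathcal{I}_{\Theta(k)}$'' with per-step TV bounds would, via the triangle inequality, sum the distances and yield a bound that \emph{degrades} with $k$ rather than forcing per-arm error $O(1/k)$; that is not how the $\log k$ factor is extracted. The actual argument in \cite{KalyanakrishnanTAS12} is closer to what you gesture at in your final paragraph: one considers, for each of $\Theta(k)$ arms, the single-arm swap that flips whether arm $1$ belongs to the top $k$, and shows that correctness on the original instance \emph{simultaneously} forces each of these swapped instances to be distinguished; since the swaps overlap on the same output event, a direct KL/likelihood-ratio calculation (not a chain) shows each swapped arm individually needs $\Omega(\log k/\eps^2)$ pulls. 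If you want to make your step three rigorous, drop the hybrid-chain language and work directly with the family of single-swap alternatives and the change-of-measure inequality $\mathbb{E}[T_a]\cdot O(\eps^2)\geq d(\Pr[\text{error}],\,1-\Pr[\text{error on swap}])$, where $d$ is the binary KL; the $\log k$ then emerges because the swap-error must be $O(1/k)$ on average.
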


We now state our formal lower bound for obtaining the entire set $I^*$ of the maximum independent set, which is a straightforward application of \Cref{prop:worst-case-mab-lb}.
\begin{lemma}
\label{lem:mabs-lb-entire-set}
Any algorithm that given \emph{only} the oracle $\orc$ for graph with $|I^*| = \omega(1)$ (but \emph{not} the graph $G$), returns the entire set of the maximum independent set $I^*$ with probability at least $9/10$ has to use $\omega(n/\eps^2)$ queries to $\orc$.
\end{lemma}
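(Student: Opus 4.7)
The plan is to reduce from the top-$k$ pure exploration problem in \Cref{prop:worst-case-mab-lb}. Given an instance drawn from $\dist(n,k)$ with $k=\omega(1)$, I would build a graph $G$ on the $n$ arms so that the set of high-reward arms is exactly the maximum independent set of $G$. The construction is straightforward: let the low-reward arms form a clique, and connect each low-reward arm to every high-reward arm. In this graph, any independent set that contains a low-reward vertex has size $1$, so the unique maximum independent set $I^*$ is precisely the set of arms with mean $\tfrac{1}{2}+\eps$.

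Next, I would verify that the MIS oracle $\orc$ on $G$ is distributionally identical to pulling the corresponding arm: for each $v\in I^*$, $\orc(v)$ returns $1$ with probability $\tfrac{1}{2}+\eps$, and for each $v\notin I^*$, $\orc(v)$ returns $1$ with probability $\tfrac{1}{2}-\eps$. Hence any algorithm that recovers $I^*$ exactly using only oracle queries, and does so with probability at least $9/10$, yields an algorithm for the top-$k$ MAB problem with the same success probability: reading off the returned set (and in particular its size) determines which of the two branches of $\dist(n,k)$ produced the instance, so it outputs the correct top-$k$ set.

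Applying \Cref{prop:worst-case-mab-lb} to the reduced algorithm forces the total number of queries to $\orc$ to be $\Omega(n\log k/\eps^2)$. Since $|I^*|\in\{k-1,k\}$ and we assumed $|I^*|=\omega(1)$, we may take $k=\omega(1)$, giving $\log k = \omega(1)$ and therefore a lower bound of $\omega(n/\eps^2)$, which is exactly the statement of the lemma.

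The main obstacle I anticipate is the slight mismatch between $\dist(n,k)$ — which reserves a single ``pivot'' arm with mean exactly $\tfrac{1}{2}$ — and our oracle model, in which every mean is either $\tfrac{1}{2}+\eps$ or $\tfrac{1}{2}-\eps$. I would handle this either by designating one fixed vertex as a permanently non-MIS vertex, mapped to the pivot arm, and checking that the Kalyanakrishnan--Tewari--Auer--Stone lower bound is unchanged because its hardness comes from distinguishing $\binom{n-1}{k}$ versus $\binom{n-1}{k-1}$ MIS configurations rather than from the reward of the pivot arm; or by using a variant of $\dist(n,k)$ in which the pivot is replaced by an ordinary $\tfrac{1}{2}-\eps$ arm and re-running their information-theoretic argument. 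Either path preserves the $\Omega(n\log k/\eps^2)$ bound and completes the reduction.
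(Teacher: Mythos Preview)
Your proposal is correct and follows essentially the same reduction as the paper. For the pivot arm, the paper's fix is simpler than either of your options: rather than mapping the pivot to a vertex and then arguing around the mean-$\tfrac12$ mismatch, it leaves the pivot out of the graph altogether---the $n$ vertices of $G$ correspond only to the non-pivot arms, and the pivot is appended on the MAB side to form an instance of $\dist(n+1,k)$; the MIS algorithm then returns $I^*$ of size $k$ or $k-1$, and the top-$k$ answer is $I^*$ or $I^*\cup\{\text{pivot}\}$ accordingly, with no need to revisit the lower-bound proof.
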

\begin{proof}
Note that to prove the lemma statement, we only need to show that there \emph{exists} a distribution of graphs such that an algorithm that returns the entire set of $I^*$ with $q$ queries will imply an algorithm that finds the top-$k$ arm in $\dist(n,k)$ with $q$ queries. Indeed, suppose we are given an algorithm $\ALG$ that given access \emph{only} to $\orc$, returns the entire set of $I^*$ with probability at least $9/10$ and $O(n)$ queries. We can then construct a hard distribution simply as follows
\begin{tbox}
\begin{itemize}
\item With probability $1/2$, let $G=(V,E)$ be with a maximum independent set $I^*$ of size $k$.
\item With probability $1/2$, let $G=(V,E)$ be with a maximum independent set $I^*$ of size $(k-1)$.
\end{itemize}
\end{tbox}
By definition, $\ALG$ can find the entire set of $I^*$ with probability at least $99/100$ in both cases of the above distribution. 

We now construct a distribution of $\dist(n',k')$ with $n'$ and $k'$ to be specified later. In the distribution of the MIS instances, we can view the random variable induced by $\orc(v)$ for each $v\in V$ as an arm. Furthermore, we can add an extra arm with a mean reward of $1/2$. This means we have constructed a distribution of $\dist(n',k)$ with $n'=n+1$ and $k'=k$ (here, we slightly overload the notation to let $n$ be the number of vertices in $G$).

Since we know the coordinate of the special arm with mean reward $1/2$, we can simply run $\ALG$ with arms on coordinates $\{2,3,\cdots, n\}$. In the end, since we can recover the \emph{entire} set of $I^*$, we can output with the following rules:
\begin{itemize}
\item If the output of $\ALG$ is of size $k$, output the same set of arms;
\item If the output of $\ALG$ is of size $k-1$, output the set of returned arms and the special arm with mean reward $1/2$.
\end{itemize}
Note that conditioning on the success of $\ALG$, the reduction algorithm always succeeds in finding the top-$k$ arms.

Finally, let $k=n/2$, and by using \Cref{prop:worst-case-mab-lb}, to output the top-$k$ arms, we need $\Omega(\frac{n'}{\eps}\cdot \log{k'})= \Omega(\frac{n}{\eps}\cdot \log{n})$ arm pulls. As such, $\ALG$ in the parameter range of $k=n/2$ must use $ \Omega(\frac{n}{\eps}\cdot \log{n})=\omega(n)$ queries to $\orc$, as desired by \Cref{lem:mabs-lb-entire-set}.
\end{proof}

\subsection{A lower bound for obtaining an \texorpdfstring{$O(1)$}{O(1)}-approximation of the maximum independent set}
In what follows, we present a slightly more involved lower bound. The lower bound shows that if the algorithm does not look into $G$, then even the easier task to obtain a \emph{constant fraction} of the MIS set $I^*$ would require $\omega(n/\eps^2)$ queries. 

For technical reasons, we work with \emph{Gaussian arms} for this lemma, for which a stronger \emph{instance-dependent} lower bound was obtained by \cite{SimchowitzJR17}. A definition of the oracle $\orc$ with Gaussian reward distributions is given as follows.
We define $\orc: V \rightarrow \mathbb{R}$ as a function that upon queried on a vertex $v \in I^*$, $\orc(v)$ returns a random sample from $\mathcal{N}(\frac{1}{2}+\eps, 1)$. Conversely, upon queried on a vertex $u \notin I^*$, $\orc(v)$ returns a random sample from $\mathcal{N}(\frac{1}{2}-\eps, 1)$.

For Gaussian arms, \cite{SimchowitzJR17} provides a lower bound that can be described as follows.
\begin{proposition}[\cite{SimchowitzJR17}, rephrased]
\label{prop:inst-opt-mab-lb}
Let $\dist(n,k)$ be a distribution of Gaussian multi-armed bandit instances as follows.
\begin{tbox}
\textnormal{
$\dist(n,k)$: a distribution of multi-armed bandits instances with $n$ arms.
\begin{itemize}
\item Uniformly at random sample $k$ coordinates, let the mean rewards of the corresponding arms be $\frac{1}{2}+\eps$.
\item For the rest of the $(n-k)$ arms, let the mean rewards of the corresponding arms be $\frac{1}{2}-\eps$.
\end{itemize}
}
\end{tbox}
\noindent
Then, any algorithm that returns the top $k$ arms from $\dist(n,k)$ with probability at least $9/10$ has to use at least $C\cdot \frac{k}{\eps^2}\cdot \log(n-k) + \frac{(n-k)}{\eps^2} \log{k}$ arm pulls for some absolute constant $C$.
\end{proposition}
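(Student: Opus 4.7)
The plan is to derive the rephrased statement as an immediate specialization of the instance-dependent lower bound proven by Simchowitz, Jamieson and Recht for Gaussian top-$k$ identification, so this is really a reduction argument rather than a new proof. The original SJR theorem says that for any fixed Gaussian instance with arm means $\mu_1\geq \cdots\geq \mu_n$ and any $\delta$-correct top-$k$ identification algorithm, the expected number of pulls of arm $i$ is at least $C'\cdot \Delta_i^{-2}\cdot \log(m_i/\delta)$, where $\Delta_i$ is the gap of arm $i$ to the threshold between the $k$-th and $(k+1)$-th best arm, and $m_i$ equals $(n-k)$ when $i$ is a top-$k$ arm and $k$ when $i$ is a bottom arm (the count of ``swap alternatives'' against which one must test).

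First, I would fix any instance in the support of $\dist(n,k)$ defined in the proposition. Since exactly $k$ arms have mean $\tfrac{1}{2}+\eps$ and the remaining $n-k$ have mean $\tfrac{1}{2}-\eps$, every gap is $\Delta_i = 2\eps$. Plugging into the SJR per-arm bound with $\delta = 1/10$ yields the per-arm sample lower bound $\Omega(\eps^{-2}\log(n-k))$ for each top-$k$ arm and $\Omega(\eps^{-2}\log k)$ for each bottom arm. Summing over all $n$ arms gives the total lower bound
\[
\Omega\!\Paren{\frac{k}{\eps^2}\log(n-k) + \frac{n-k}{\eps^2}\log k},
\]
which matches the claimed bound up to the absolute constant $C$.

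Second, to pass from the fixed-instance bound to the distributional bound in the statement, I would invoke Yao's minimax principle. Any randomized algorithm succeeding on $\dist(n,k)$ with probability $\geq 9/10$ must, by averaging over its internal randomness, succeed with probability at least $9/10$ on some deterministic algorithm against the random instance, and hence on at least one fixed instance in the support of $\dist(n,k)$ with this probability. Because every instance in the support has the same gap profile, the per-instance SJR bound applies uniformly, giving the claimed total query lower bound in expectation (and then in the worst case over instances).

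The main obstacle I would expect is not the bookkeeping above but rather making sure the invoked instance-dependent SJR statement carries the tight $\log m_i$ factor (as opposed to $\log(n/\delta)$), since this is what splits cleanly into $\log(n-k)$ for top arms and $\log k$ for bottom arms and is exactly what makes the statement nontrivial compared to a naive union-bound argument. Verifying this requires tracking the change-of-measure argument in SJR for the ``swap-one-arm'' alternative hypotheses, but once that is granted the rephrasing is essentially mechanical and no further ideas are needed.
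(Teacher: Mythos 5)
The paper does not actually prove \Cref{prop:inst-opt-mab-lb}: it is stated as a (deliberately weakened) rephrasing of the main lower bound of \cite{SimchowitzJR17} and is simply cited. Your overall plan---specialize the Simchowitz--Jamieson--Recht moderate-confidence top-$k$ lower bound to the uniform-gap instance $\Delta_i = 2\eps$ with $\delta = 1/10$ and sum the per-arm contributions---is exactly the intended derivation, and the arithmetic in your first step matches the claimed bound.

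However, the middle of your argument has a genuine flaw. You state the SJR result as a \emph{fixed-instance} guarantee (``for any fixed Gaussian instance and any $\delta$-correct algorithm, arm $i$ is pulled $\Omega(\Delta_i^{-2}\log(m_i/\delta))$ times'') and then try to pass from the distributional statement to a single instance via Yao-style averaging: ``succeed with probability at least $9/10$ on at least one fixed instance in the support.'' That step is vacuous: an algorithm required to succeed only on one fixed instance can output that instance's top-$k$ set with zero samples, so no nontrivial per-instance lower bound can follow from that hypothesis. More importantly, at constant $\delta$ the $\log(n-k)$ and $\log k$ factors are precisely what a fixed-instance change-of-measure argument \emph{cannot} give (it only yields $\log(1/\delta) = O(1)$); those factors are the content of the simulator technique, and SJR prove them against the \emph{randomly permuted} instance class---algorithms that are $\delta$-correct over the uniform random placement of the $k$ high arms. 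That distribution is exactly $\dist(n,k)$ as defined in the proposition, so the correct (and shorter) route is to apply their theorem directly to the distribution, with only the algorithm's internal randomness handled by averaging; no detour through a per-fixed-instance bound and Yao's principle is needed, and as written that detour does not hold up. Your closing worry about whether the $\log m_i$ factor survives is well placed---it is exactly the point where the fixed-instance formulation breaks and the distributional formulation is required.
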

Note that \Cref{prop:inst-opt-mab-lb} only states a very weak version of the main results of \cite{SimchowitzJR17} -- the paper actually proved a much stronger lower bound. Nevertheless, as we will see shortly, the current version is already sufficient to prove a desired lower bound, stated as follows.
\begin{lemma}
\label{lem:mabs-lb-approx-set}
Any algorithm that returns a set $\hat{I}$ with probability at least $9/10$, such that 
\begin{enumerate}[label=\alph*).]
\item $\hat{I}\subseteq I^*$;
\item $\card{\hat{I}} \geq \frac{2}{3} \cdot \card{I^*}$;
\end{enumerate}
given \emph{only} the oracle $\orc$ with \emph{Gaussian reward} distributions (but \emph{not} the graph $G$) has to use $\omega(n/\eps^2)$ queries to $\orc$ in the worst case. 
\end{lemma}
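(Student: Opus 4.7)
The plan is to reduce to the Gaussian top-$k$ lower bound of \Cref{prop:inst-opt-mab-lb}, taking $k := \lfloor n/2 \rfloor$ so that the bound there evaluates to $\Omega(n\log n/\eps^2)$. Because $\ALG$ sees the graph only through $\orc$, it can be simulated on any instance $\nu$ drawn from $\dist(n,k)$ by identifying the top-$k$ arms of $\nu$ with the hidden MIS $I^*$; a single run of $\ALG$ thus produces $\hat{I}$ with $\hat{I} \subseteq I^*$ and $|\hat{I}| \geq 2k/3$ with probability $\geq 9/10$. The key observation is a symmetry: since $\dist(n,k)$ is invariant under permutations of the arm labels and $\ALG$'s rules do not treat vertices differently, the marginals $p^\star := \Pr(v \in \hat{I} \mid v \in I^*)$ and $p^{\star\star} := \Pr(v \in \hat{I} \mid v \notin I^*)$ are each independent of $v$.

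Combined with the success guarantee, the symmetry yields a constant gap between $p^\star$ and $p^{\star\star}$. On success, $|\hat{I} \cap I^*| = |\hat{I}| \geq 2k/3$, giving $\Exp[|\hat{I} \cap I^*|] \geq \tfrac{9}{10}\cdot\tfrac{2k}{3} = \tfrac{3k}{5}$; by symmetry $\Exp[|\hat{I} \cap I^*|] = k\,p^\star$, so $p^\star \geq 3/5$. Conversely, success forces $\hat{I} \subseteq I^*$, so mass on $\hat{I}\setminus I^*$ can only accrue on failure, giving $\Exp[|\hat{I}\setminus I^*|] \leq n/10$ and hence $p^{\star\star} \leq 1/5$ when $k = \lfloor n/2 \rfloor$. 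I would then \emph{amplify} $\ALG$ by running it $m$ times independently with fresh non-persistent oracle samples, producing $\hat{I}^{(1)},\dots,\hat{I}^{(m)}$, and classifying a vertex $v$ as top iff $v$ appears in at least $0.4\,m$ of the outputs. A Chernoff bound on the counts, exploiting the $3/5$ vs.\ $1/5$ gap, gives per-vertex misclassification probability $e^{-\Theta(m)}$, so a union bound over $n$ vertices with $m = \Theta(\log n)$ correctly recovers the full top-$k$ set with probability $\geq 9/10$, using $m\cdot q(n) = O(q(n)\log n)$ total queries.

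Finally, \Cref{prop:inst-opt-mab-lb} at $k = \lfloor n/2 \rfloor$ forces $m\cdot q(n) = \Omega(n\log n/\eps^2)$; if $q(n) = o(n/\eps^2)$ then $m\,q(n) = o(n\log n/\eps^2)$, contradicting the lower bound and establishing the claimed $\omega(n/\eps^2)$ bound. The main obstacle will be the symmetry argument: one must verify carefully that relabeling the vertices is a genuine symmetry of the joint law of $\big(\nu,\text{transcript of }\ALG\big)$, which relies crucially on $\ALG$'s inability to inspect $G$ (so it has no label-dependent prior); a secondary concern is that the amplification step uses fresh independent oracle samples across runs, which is exactly what the non-persistent noise model permits.
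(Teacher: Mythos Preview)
Your reduction has a quantitative gap that prevents it from reaching the stated $\omega(n/\eps^2)$ conclusion. You amplify by running $\ALG$ independently $m=\Theta(\log n)$ times and taking a majority, so the total cost is $\Theta(q(n)\log n)$. At $k=\lfloor n/2\rfloor$, \Cref{prop:inst-opt-mab-lb} gives only $\Omega(n\log n/\eps^2)$, and matching $\Theta(q(n)\log n)$ against $\Omega(n\log n/\eps^2)$ yields merely $q(n)=\Omega(n/\eps^2)$. Your final sentence slips from ``not $o(n/\eps^2)$'' to ``$\omega(n/\eps^2)$'', which is not a valid implication; an algorithm with $q(n)=\Theta(n/\eps^2)$ is fully compatible with everything your reduction establishes. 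The $\log n$ amplification overhead exactly cancels the $\log n$ gain in the lower bound, so no super-linear conclusion follows.

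The paper avoids this by engineering a reduction whose overhead is $o(\log n)$. It takes $k=(1-\tfrac{1}{\log n})n$, so the lower bound is still $\Omega\big(\tfrac{k}{\eps^2}\log(n-k)\big)=\Omega(n\log n/\eps^2)$. Rather than a single majority over $\Theta(\log n)$ runs, it proceeds in $r=O(\log\log n)$ \emph{rounds}, each consisting of $t=O(\log\log n)$ runs of $\ALG$ on the \emph{surviving} vertex set; a charging argument shows each round peels off at least a $1/3$ fraction of the remaining $I^*$-vertices (while never admitting a non-$I^*$ vertex), so after $r$ rounds only $O(n/\log n)$ vertices remain, and those are finished off with $O(\log n/\eps^2)$ direct queries each, costing $O(n/\eps^2)$ total. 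The whole reduction uses $O((\log\log n)^2)$ calls to $\ALG$ plus $O(n/\eps^2)$ extra queries; if $q(n)=O(n/\eps^2)$ this is $O(n(\log\log n)^2/\eps^2)=o(n\log n/\eps^2)$, contradicting the lower bound and giving the desired $\omega(n/\eps^2)$.

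A secondary point: your symmetry claim that $p^\star,p^{\star\star}$ are vertex-independent is not automatic for an arbitrary $\ALG$; you need to symmetrize $\ALG$ by pre-composing with a uniformly random permutation of the labels (which preserves both its guarantee and its query count) before the per-vertex Chernoff argument goes through on a \emph{fixed} instance. You flag this as ``the main obstacle'' but do not carry it out; it is fixable, whereas the $\Theta(\log n)$-overhead issue above is the real blocker.
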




We will prove the lemma by contradiction. Assume there exists a $\frac{2}{3}$-approximation algorithm for the non-persistent noise setting that uses $O(n/\eps^2)$ queries to $\orc$ (let us call it $\ALG$). We show that we can use this algorithm to recover $I^*$.

\begin{Algorithm}
\label{alg:const-to-full}
 An algorithm for obtaining MIS $I^*$ using a $\frac{2}{3}$-approximation algorithm for $I^*$.\\
\textbf{Input:} A set $V$ of $n$ vertices; access to $\orc$; a $2/3$-approximate algorithm $\ALG$. \\
\textbf{Output}: The MIS $I^*$. \\
\textbf{Parameters:} $\card{I^*}=k=\paren{1-\frac{1}{\log n}} \cdot n \, , \qquad r:=\log_{1.5} \log n \, , \qquad t:=100 \ln\ln{n}$. 
\begin{enumerate}
    \item Let $I = \emptyset$ and $V_0=V$.
    \item For $i=1$ to $r$:
    \begin{itemize}
        \item Run $\ALG$ for $t$ times. 
        \item Let $I_i$ be the set of vertices that have been selected at least $t/2$ times.
        \item Let $V_i \leftarrow V_{i-1} \setminus I_i$ and $I \leftarrow I \cup I_i$.
    \end{itemize}
    \item For each $v \in V_r$, query oracle $2\log{n}/\eps^2$ times and compute the majority answer for them.
    \item Add to $I$ all vertices whose majority answer is ``yes'' and then return $I$.
\end{enumerate}
\end{Algorithm}

\begin{claim}\label{clm:const-to-full-alg-reps}
    When the $2k/3$-approximation algorithm ($\ALG$) is run for $t$ times, it is correct $8t/10$ times with probability $1-1/\ln^2 n$.
\end{claim}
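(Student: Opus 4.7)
The plan is to apply a standard concentration bound (Proposition~\ref{prop:chernoff}) to the indicator variables for success across independent runs of $\ALG$. Concretely, for $i \in [t]$ let $X_i$ be the indicator that the $i$-th invocation of $\ALG$ returns a valid $2/3$-approximate subset of $I^*$; by the hypothesis that $\ALG$ succeeds with probability at least $9/10$, we have $\Exp[X_i] \geq 9/10$, and the runs are mutually independent because $\ALG$ draws fresh oracle randomness each call (we are in the non-persistent noise setting). Setting $X := \sum_{i=1}^{t} X_i$, we have $\Exp[X] \geq 9t/10$, and our goal is to lower bound $X \geq 8t/10$ with probability at least $1 - 1/\ln^2 n$.

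Next I would invoke the Chernoff-Hoeffding bound (\Cref{prop:chernoff}) with deviation $s = t/10$ to obtain
\[
    \Pr\paren{X \leq \tfrac{8t}{10}} \;\leq\; \Pr\paren{X - \Exp[X] \leq -\tfrac{t}{10}} \;\leq\; \exp\!\paren{-\tfrac{2 (t/10)^2}{t}} \;=\; \exp\!\paren{-\tfrac{t}{50}}.
\]
Substituting the choice $t = 100 \ln\ln n$ from \Cref{alg:const-to-full} yields $\exp(-2\ln\ln n) = 1/\ln^2 n$, which matches the failure probability claimed in the statement.

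The entire argument is essentially a one-line Chernoff calculation, so I do not anticipate a real obstacle; the only care required is to note that $\ALG$'s internal randomness (including its oracle queries) is independent across invocations, so that the indicators $X_i$ are genuinely independent Bernoulli variables and the concentration inequality applies. If desired, one could alternatively use the multiplicative Chernoff bound (\Cref{prop:chernoff-mul}) with $\delta = 1/9$ to achieve a bound of the same asymptotic form, but the additive Hoeffding version above is cleanest and directly produces the target $1/\ln^2 n$ tail.
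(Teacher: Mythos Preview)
Your proposal is correct and follows essentially the same approach as the paper: define indicator variables for success of each independent run, apply the additive Chernoff--Hoeffding bound (\Cref{prop:chernoff}) with deviation $t/10$, and plug in $t = 100\ln\ln n$ to obtain the $1/\ln^2 n$ failure probability. If anything, your write-up is slightly more careful in justifying independence across runs and in writing $\Exp[X_i] \geq 9/10$ rather than equality.
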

\begin{proof}
    A single run of the $2k/3$-approximation algorithm is correct with probability $9/10$ and fails with probability $1/10$.
    Across $t$ independent runs, in expectation it will succeed $9t/10$ times.

    Let the random variable $X_i=1$ if the $i^{th}$ run of the algorithm is correct and $X_i=0$ otherwise for $i\in [t]$.
    Observe that $X = \sum_i X_i$ is the number of times the algorithm is correct.
    We have that $\expect{X}= 9t/10$, and $X$ is a summation of the independent indicator random variables so, we can apply \Cref{prop:chernoff} to show that
    \begin{align*}
    \Pr\paren{X - \expect{X} \leq -t/10} &\leq \exp\paren{-2 t/100} \tag{applying \Cref{prop:chernoff}}\\
    &= \exp\paren{-2 \cdot \ln \ln n}\\
    &=1/\ln^2 n.
    \end{align*}
    Thus, the $2k/3$-approximation algorithm is correct $8t/10$ times with probability $1-1/\ln^2 n$.
\end{proof}

We now condition on the event in \Cref{clm:const-to-full-alg-reps} for the rest of the proof.
A vertex which is not in $I^*$ will be reported at most $2t/10<t/2$ times so it is never be added to $I$.
Since different vertices can be returned as part of $2/3$-approximation across different runs, we wish to bound the number of vertices added to $I$ after each round.
Let $k_i$ be the remaining size of $I^*$ at the beginning of round $i$ i.e. $k_i=\card{I^* \setminus \cup_{j=1}^{i-1} I_j}$.
\begin{claim}\label{clm:const-to-full-alg-round-helper}
    At least $k_i/3$ vertices are added to $I$ in round $i$ i.e.\ $\card{I_i} \geq k_i/3$. 
\end{claim}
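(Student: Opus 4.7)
My plan is a double-counting argument over the $t$ independent runs of $\ALG$ executed in round $i$, conditioned on the good event from \Cref{clm:const-to-full-alg-reps} that at least $8t/10$ of those runs are ``correct.'' Here I interpret a correct run as one that returns a set $\hat{I}_j$ with $\hat{I}_j \subseteq I^* \cap V_{i-1}$ and $|\hat{I}_j| \geq \tfrac{2}{3}|I^* \cap V_{i-1}| = 2k_i/3$; this is valid because the only vertices removed from $V$ in earlier rounds are vertices already placed into $I$, and those belong to $I^*$, so $I^* \cap V_{i-1}$ remains the maximum independent set of $G[V_{i-1}]$ and $\ALG$'s $(2/3)$-approximation guarantee transfers intact.

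The first step is to show that $I_i \subseteq I^*$: any $v \notin I^*$ can be selected by $\ALG$ only during an incorrect run (since correct runs produce subsets of $I^*$), and there are at most $2t/10$ such runs, which is strictly less than the threshold $t/2$. Thus every vertex that appears at least $t/2$ times is automatically in $I^* \cap V_{i-1}$.

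The core step converts the total-incidence lower bound into a count of vertices crossing the threshold. Summing $|\hat{I}_j|$ over the at least $8t/10$ correct runs shows that the total number of $(v,\text{run})$ incidences with $v \in I^* \cap V_{i-1}$ is at least $(8t/10)(2k_i/3)$. Writing $A := |I_i|$ and letting $c_v^s$ denote the number of correct runs that selected $v$, the vertices contributing to the total split into those with $c_v^s \geq t/2$ (at most $A$ of them, each contributing $\leq 8t/10$) and those with $c_v^s < t/2$ (at most $k_i - A$ of them, each contributing $< t/2$). Comparing the two sides and rearranging yields $A \geq k_i/3$ as claimed.

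The main obstacle is the subtlety flagged in the first paragraph: justifying that $\ALG$, although specified for the MIS problem with oracle $\orc$ tuned to $I^*$, still outputs a $(2/3)$-approximation of $I^* \cap V_{i-1}$ when run on the induced subgraph $G[V_{i-1}]$. The argument relies crucially on the invariant that only $I^*$-vertices are ever removed, so that throughout all rounds the remaining MIS coincides with the restriction of $I^*$ and the oracle's answers remain the correct noisy indicators of this restricted MIS. Everything else is a clean averaging estimate.
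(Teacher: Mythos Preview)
Your double-counting argument is essentially the dual of the paper's charging argument (the paper tallies, for each vertex in $I^*\cap V_{i-1}$, how many correct runs \emph{miss} it, while you tally how many select it), and your preliminary observations---that $I_i \subseteq I^*$ and that $I^*\cap V_{i-1}$ remains the MIS of $G[V_{i-1}]$ so that $\ALG$'s guarantee transfers---are sound and match what the paper relies on. One small slip: the group $\{v: c_v^s < t/2\}$ has size $k_i$ minus the size of the first group, and since you only know the first group has size \emph{at most} $A$, the second group has size \emph{at least} $k_i - A$, not at most. This does not actually break the bound, because $B\cdot(8t/10) + (k_i-B)\cdot(t/2)$ is increasing in $B$, so replacing the true first-group size $B\leq A$ by $A$ still gives a valid upper bound on the total incidences.

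The real gap is in the final rearrangement. Carrying out your inequality
\[
\frac{8t}{10}\cdot\frac{2k_i}{3}\;\leq\;A\cdot\frac{8t}{10}+(k_i-A)\cdot\frac{t}{2}
\]
gives $k_i/30\leq 3A/10$, i.e.\ only $A\geq k_i/9$, not $A\geq k_i/3$. This is not a recoverable arithmetic slip: with $t=100\ln\ln n$ as in \Cref{alg:const-to-full}, an adversarial $\ALG$ can in every correct run output a fixed ``core'' of roughly $k_i/9$ vertices together with a rotating set of $2k_i/3 - k_i/9$ others, arranged so each non-core vertex is picked just under $t/2$ times; with the incorrect runs outputting nothing, one then gets $|I_i|\approx k_i/9 < k_i/3$. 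The paper's own proof has the same defect---it asserts that a vertex with charge $<t/2$ over the correct runs is added to $I_i$, which does not follow since such a vertex is only guaranteed $>3t/10$ selections among correct runs and the $\leq 2t/10$ incorrect runs may contribute nothing. The bound both arguments actually establish is $|I_i|\geq k_i/9$, which still suffices for \Cref{clm:const-to-full-alg-rounds} and the overall lower bound after adjusting $r$ from $\log_{3/2}\log n$ to $\log_{9/8}\log n$.
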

\begin{proof}
    We will prove this using a charging argument.
    For each of the $8t/10$ iterations, we will put a charge of $1$ on a vertex that was not output in the $2/3$ fraction of $I^*$.
    The total charge on a vertex corresponds to the number of times it was not output. If that exceeds $t/2$ then the vertex is not added to $I^*$ by \Cref{alg:const-to-full} in round $i$.

    The total amount of charge we distribute is $8t/10 \cdot k/3$ since only $k/3$ vertices are not output in each of the $8t/10$ correct rounds.
    This charge is distributed over $k$ vertices so the average charge on a vertex is $8t/30$.
    The number of vertices that could have a charge of $\geq t/2$ is at most $8k/15 <2k/3$.
    Thus, at least $k/3$ vertices have a charge of $< t/2$ and are added to $I^*$ in round $i$.
\end{proof}

We now show that after $r$ iterations the remaining size of $I^*$ is $k/\log n$.
\begin{claim}\label{clm:const-to-full-alg-rounds}
    After $r$ iterations the remaining size of the MIS $I^*$ is $k/\log n$.
\end{claim}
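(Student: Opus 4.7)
The plan is to apply \Cref{clm:const-to-full-alg-round-helper} inductively across the $r$ rounds. Let $k_i := |I^* \setminus \bigcup_{j=1}^{i-1} I_j|$ denote the remaining size of $I^*$ at the start of round $i$, so that $k_1 = k$ and $k_{r+1}$ is the quantity we wish to bound. Conditioned on the high-probability event of \Cref{clm:const-to-full-alg-reps} in round $i$, \Cref{clm:const-to-full-alg-round-helper} gives $|I_i| \geq k_i/3$. Since $\ALG$'s correct runs return subsets of $I^*$, every vertex in $I_i$ lies in $I^*$, hence $k_{i+1} = k_i - |I_i| \leq (2/3)\, k_i$.

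Iterating this recursion $r$ times yields $k_{r+1} \leq (2/3)^r \cdot k$. Now I substitute $r = \log_{1.5} \log n$: since $(2/3)^{\log_{3/2} \log n} = 1/\log n$, this gives
\[
k_{r+1} \leq \frac{k}{\log n},
\]
which is exactly the claimed bound.

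For the probabilistic guarantee, the event of \Cref{clm:const-to-full-alg-reps} fails in any given round with probability at most $1/\ln^2 n$. By a union bound over all $r = O(\log\log n)$ rounds, the probability that the recursion holds in every round is at least $1 - O(\log\log n / \log^2 n) = 1 - o(1)$, so the bound on $k_{r+1}$ holds with high probability.

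I do not expect any serious obstacle here: the argument is essentially a geometric decay driven by \Cref{clm:const-to-full-alg-round-helper}, plus a union bound. The only subtlety worth flagging is the need to justify that $I_i \subseteq I^*$ on the good event, which follows immediately from the charging argument already used inside the proof of \Cref{clm:const-to-full-alg-round-helper} (a vertex outside $I^*$ is never added because it cannot accumulate charge beyond $2t/10 < t/2$), so $V_i = V_{i-1} \setminus I_i$ removes only true MIS vertices and the recursion on $k_i$ is valid.
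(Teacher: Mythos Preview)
Your proposal is correct and follows essentially the same approach as the paper: both arguments apply \Cref{clm:const-to-full-alg-round-helper} to get the recursion $k_{i+1} \leq (2/3)k_i$ and then iterate $r = \log_{1.5}\log n$ times to reach $k/\log n$. Your write-up is slightly more explicit than the paper's in justifying $I_i \subseteq I^*$ and in handling the union bound over rounds (which the paper defers to the proof of \Cref{clm:const-to-full-alg-result}), but the substance is identical.
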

\begin{proof}
    In each round, we recover at least $1/3$ fraction of the remaining vertices of $I^*$ (\Cref{clm:const-to-full-alg-round-helper}).
    This means that for all $i \in [r]$ we have:
    \[  
    \card{I^* \setminus \cup_{j=1}^{i} I_j} \leq \frac{2}{3} \cdot \card{I^* \setminus \cup_{j=1}^{i-1} I_j}.
    \]
    Therefore we have 
    \[  
    \card{I^* \setminus \cup_{j=1}^{r} I_j} \leq \paren{\frac{2}{3}}^r \cdot \card{I^*} = k/\log n.
    \]
\end{proof}

Thus the remaining number of MIS vertices are $k/\log n \leq n/\log n$ and the total number of non-MIS vertices are $n/\log n$ (by the choice of $k$).
Therefore querying all the remaining vertices $2 \log n/\eps^2$ times gives us $4n/\eps^2$ additional queries.
As a result, with probability $1-1/n^2$ we identify the remaining vertices in $I^*$ and are able to recover it completely. A union bound over all the vertices implies the correctness of this with probability $1-1/n$.
\begin{claim}\label{clm:const-to-full-alg-result}
    \Cref{alg:const-to-full} uses $O\paren{(\log \log n)^2}$ calls to the $\frac{2}{3}$-approximation algorithm and $O(n/\eps^2)$ queries to $\orc$ and returns $I^*$ with probability $1-o(1)$.
\end{claim}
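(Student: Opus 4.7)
The plan is to verify the three assertions of the claim in sequence by combining the three preceding claims (\Cref{clm:const-to-full-alg-reps}, \Cref{clm:const-to-full-alg-round-helper}, and \Cref{clm:const-to-full-alg-rounds}) with one additional Chernoff argument for the clean-up step. The bound on $\ALG$-calls is immediate from the loop structure of \Cref{alg:const-to-full}: the outer loop runs $r=\log_{1.5}\log n$ times and each iteration invokes $\ALG$ exactly $t=100\ln\ln n$ times, for a total of $rt=O((\log\log n)^2)$ calls.

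For the direct queries to $\orc$, I would observe that these are made only in the final clean-up step, where each $v\in V_r$ is queried $2\log n/\eps^2$ times. To bound $\card{V_r}$, note that $V_r$ consists of the vertices not yet added to $I$; by \Cref{clm:const-to-full-alg-rounds}, at most $k/\log n\le n/\log n$ MIS vertices remain unidentified, and by the choice $k=(1-1/\log n)n$ there are exactly $n/\log n$ non-MIS vertices in total. Hence $\card{V_r}\le 2n/\log n$, and the number of direct $\orc$ queries is at most $(2n/\log n)\cdot(2\log n/\eps^2)=4n/\eps^2=O(n/\eps^2)$.

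For correctness, I would first condition on \Cref{clm:const-to-full-alg-reps} holding in each of the $r$ rounds; a union bound shows this happens with probability at least $1-r/\ln^2 n=1-o(1)$. Under this conditioning, \Cref{clm:const-to-full-alg-round-helper} gives that each round adds at least $k_i/3$ of the remaining MIS vertices to $I$, and \Cref{clm:const-to-full-alg-rounds} then yields the $\card{V_r}$ bound used above. A key observation is that no non-MIS vertex is ever added to $I$ during the first phase: the hypotheses on $\ALG$ force $\hat{I}\subseteq I^*$ on every successful run, so a non-MIS vertex can appear in at most $2t/10<t/2$ of the $t$ outputs of a round and is therefore never flagged. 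For the clean-up step, \Cref{prop:chernoff} applied with $2\log n/\eps^2$ samples shows that the majority correctly identifies the membership of each $v\in V_r$ with probability at least $1-1/n^2$; a union bound over the at most $2n/\log n$ vertices in $V_r$ gives an $o(1)$ failure probability for this phase. Combining the two failure sources yields the stated $1-o(1)$ success probability of recovering $I^*$ exactly.

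The main subtlety to be careful about is the asymmetry between MIS and non-MIS vertices in the ``voting'' scheme: the bound $\card{I_i}\ge k_i/3$ in \Cref{clm:const-to-full-alg-round-helper} only controls the MIS side, and the guarantee that no non-MIS vertex sneaks into $I$ comes separately from the subset property $\hat{I}\subseteq I^*$ of $\ALG$ combined with the $t/2$ threshold chosen in \Cref{alg:const-to-full}. Once this observation is spelled out, the rest of the argument is essentially bookkeeping over the three established claims plus one Chernoff application.
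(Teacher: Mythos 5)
Your proposal is correct and follows essentially the same route as the paper's own proof: bounding the $\ALG$-calls by $rt$, bounding the clean-up queries via $\card{V_r}\leq 2n/\log n$ times $2\log n/\eps^2$, and proving exact recovery by conditioning on \Cref{clm:const-to-full-alg-reps} across rounds, using the $t/2$ threshold to exclude non-MIS vertices, and finishing with a Chernoff-plus-union-bound argument for the final majority queries. Your accounting of the failure probability (union bound over the $r$ rounds rather than over all $rt$ runs) is in fact slightly tighter than the paper's stated $O((\log\log n)^2/\log n)$ bound, but both are $o(1)$, so there is no substantive difference.
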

\begin{proof}
    It is easy to observe that the output of the algorithm is $I^*$.
    This is because we know that we do not add any non-MIS vertices to $I$ and in the last step we add all the remaining vertices of $I^*$ to $I$ implying that the returned value $I = I^*$.

    We know that the number of queries in the last step is $4n/\eps^2$.
    Also, the number of calls to the $\frac{2}{3}$-approximation algorithm is $r \cdot t=O\paren{(\log \log n)^2}$.

    Finally, we have conditioned on the event in \Cref{clm:const-to-full-alg-reps} and we run $O\paren{(\log \log n)^2}$ independent copies of it.
    Thus, by a union bound we get a failure probability of at most $O\paren{(\log \log n)^2/\log n}$.
    Finally, we union bound over the failure probability of $1/n$ from the last step giving us a failure probability of $O\paren{(\log \log n)^2/\log n}=o(1)$.
\end{proof}

We now finalize the proof of \Cref{lem:mabs-lb-approx-set}.
\begin{proof}[Proof of \Cref{lem:mabs-lb-approx-set}]
    We prove this by contradiction. 
    We assume towards a contradiction that there is an algorithm that returns a set $\hat{I}$ with probability at least $9/10$, such that $\hat{I}\subseteq I^*$ and $\card{\hat{I}} \geq \frac{2}{3} \cdot \card{I^*}$ using $O(n/\eps^2)$ queries to $\orc$.
    We call this the $\frac{2}{3}$-approximation algorithm.
    
    We fix $\card{I^*}=k=(1-\frac{1}{\log{n}})\cdot n$.
    We will use this algorithm to recover $I^*$ completely using $o(k/\eps^2 \log(n-k))= o(n \log n/\eps^2)$ queries.
    
    We know that \Cref{alg:const-to-full} uses $O\paren{(\log \log n)^2}$ calls to the $\frac{2}{3}$-approximation algorithm and $O(n/\eps^2)$ queries to $\orc$ and returns $I^*$.
    Also, by assumption we know that the $\frac{2}{3}$-approximation algorithm uses $O(n/\eps^2)$ queries to $\orc$.
    Thus, the total number of queries used is $O(n/\eps^2) + O(n/\eps^2) \cdot O\paren{(\log \log n)^2} = O\paren{n(\log \log n)^2/\eps^2} = o(n \log n/\eps^2)$.
    Therefore, we get a contradiction and the lemma holds.
\end{proof}

\begin{remark}
We note that \Cref{alg:mis-bandit} also works for the setting when $\orc$ returns samples drawn from Gaussian distributions. This is similar to the MAB setting with Gaussian reward distribution. This is possible because we can use Gaussian tail bound in \Cref{clm:mabs-elimination-bounds} to get similar concentration results, and the rest of the proof follows.
\end{remark}

\end{document}